\tikzset{
->, 
>=stealth', 
node distance=2cm, 
every state/.style={thick}, 
initial text=$ $, 
}
\theoremstyle{plain}
\newtheorem{mytheorem}{Theorem}[section]
\newtheorem{myproposition}[mytheorem]{Proposition}
\newtheorem{mylemma}[mytheorem]{Lemma}
\theoremstyle{definition}
\newtheorem{myexample}[mytheorem]{Example}
\newtheorem{mydefinition}[mytheorem]{Definition}
\newcommand\numberthis{\addtocounter{equation}{1}\tag{\theequation}}
\newcolumntype{C}[1]{>{\centering\let\newline\\\arraybackslash\hspace{0pt}}m{#1}}
\newcolumntype{R}[1]{>{\raggedleft\let\newline\\\arraybackslash\hspace{0pt}}m{#1}}
\newcolumntype{L}[1]{>{\raggedright\let\newline\\\arraybackslash\hspace{0pt}}m{#1}}
\newcommand{\game}{\mathcal{G}}
\newcommand{\mdp}{\mathcal{M}}
\newcommand{\wg}{\mathcal{W}}
\newcommand{\Av}{\mathrm{A\kern-0.09em v}}
\newcommand{\post}{\mathrm{post}}
\newcommand{\nat}{\mathbb{N}}
\newcommand{\mynext}{\mathbb{X}}
\newcommand{\prob}{\mathbb{P}}
\newcommand{\zero}{\mathbf{0}}
\newcommand{\one}{\mathbf{1}}
\newcommand{\str}{\mathrm{str}}
\newcommand{\minimizer}{\bigcirc}
\newcommand{\maximizer}{\square}
\newcommand{\playerReduction}{\mdp_\mathrm{PlRd}}
\newcommand{\localPropagation}{\wg_\mathrm{LcPg}}
\newcommand{\wpw}{\mathrm{WPW}}
\newcommand{\place}{\underline{\phantom{n}}\,} 
\DeclareMathOperator*{\argmin}{arg\,min}
\newcommand{\wpath}{{\rm WPath}}
\begin{document}
\title{Widest Paths and Global Propagation in Bounded Value Iteration for Stochastic Games 
	}
%
%
\author{
Kittiphon Phalakarn\inst{1}\thanks{The work was done during K.P.'s internship at National Institute of Informatics, Japan, while he was a student at Chulalongkorn University, Thailand.}
\and
Toru Takisaka\inst{2}
 \and
Thomas Haas\inst{3} \and
Ichiro Hasuo\inst{2,4}
}
\authorrunning{
K.\ Phalakarn et al.
}
%
\institute{
University of Waterloo, Waterloo, Canada \\ \email{kphalakarn@uwaterloo.ca} \and
National Institute of Informatics, Tokyo, Japan \\ \email{\{takisaka,hasuo\}@nii.ac.jp}\and
Technical University of Braunschweig, Braunschweig, Germany \\ \email{thohaas@tu-bs.de}\and
The Graduate University for Advanced Studies (SOKENDAI), Tokyo, Japan 
}
\maketitle              
\begin{abstract}
Solving \emph{stochastic games} with the reachability objective is a fundamental problem, especially in quantitative verification and synthesis. For this purpose, \emph{bounded value iteration (BVI)} attracts attention as an efficient iterative method. However, BVI's performance is often impeded by costly \emph{end component (EC) computation} that is needed to ensure convergence. Our contribution is a novel BVI algorithm that conducts, in addition to local propagation by the Bellman update that is typical of BVI, \emph{global} propagation of upper bounds that is not hindered by ECs. To conduct global propagation in a computationally tractable manner, we construct a weighted graph and solve the \emph{widest path problem} in it. Our experiments show the algorithm's performance advantage over the previous BVI algorithms that rely on EC computation.

\end{abstract}
%
%
%

\section{Introduction} 
\subsection{Stochastic Game (SG)}
A \emph{stochastic game}~\cite{Condon92} is a two-player game played on a graph. In an SG, an action $a$ of a player causes a transition from the current state $s$ to a successor $s'$, with the latter chosen from a prescribed probability distribution $\delta(s,a,s')$. Under the reachability objective, the two players (called \emph{Maximizer} and \emph{Minimizer}) aim to maximize and minimize,  respectively, the reachability probability to a designated target state. 

Stochastic games are a fundamental construct in theoretical computer science, especially in the analysis of probabilistic systems. Its complexity is interesting in its own: the problem of threshold reachability---whether Maximizer has a strategy that ensures the reachability probability to be at least given $p$---is known to be in $\mathsf{UP}\cap\mathsf{coUP}$~\cite{HoffmanK66}, but no polynomial algorithm is known. The practical significance of SGs comes from the number of problems that can be encoded to SGs and then solved. Examples include the following: solving deterministic parity games~\cite{ChatterjeeF11}, solving stochastic games with the parity or mean-payoff objective~\cite{AnderssonM09}, and a variety of probabilistic verification and reactive synthesis problems in different application domains such as cyber-physical systems. See e.g.~\cite{SvorenovaK16}.

SGs are often called 2.5-player games, where probabilistic branching is counted as 0.5 players. They generalize deterministic automata (0-player), Markov chains (MCs, 0.5-player), nondeterministic automata (1-player), Markov decision processes (MDPs, 1.5-player) and (deterministic) games (2-player). Many theoretical considerations on these special cases carry over smoothly to SGs. However, SGs have their peculiarities, too. One example is the treatment of end components in bounded value iteration, as we describe later.

\subsection{Value Iteration (VI) }
In an SG, we are interested in the \emph{optimal} reachability probability, that is, the reachability probability when both Maximizer and Minimizer take their optimal strategies. The function that returns these optimal reachability probabilities is called the \emph{value function} $V(\game)$ of the SG $\game$; our interest is in computing this value function, desirably constructing optimal strategies for the two players at the same time. For this purpose, two principal families of solution methods are \emph{strategy iteration (SI)}~\cite{HoffmanK66} and \emph{value iteration (VI)}~\cite{Condon92,ChatterjeeH08}---the latter is commonly preferred for performance reasons.

\begin{wrapfigure}[5]{r}{0.18\textwidth}
\vspace*{-2em}

\begin{math}
  \vcenter{\xymatrix@R=-.3em@C-1em{
  &
  *+<1pc>[o][F-]{s_{1}}
  \\
 *+<1pc>[o][F-]{s}
   \ar[ru]^-{p_{1}}
   \ar[rd]_-{p_{n}}  
  &
  \vdots
  \\
  &
  *+<1pc>[o][F-]{s_{n}}
 }}
\end{math}
\end{wrapfigure}
The mathematical principle that underpins VI is the characterization of the value function $V(\game)$ as the \emph{least fixed point (lfp)} of an function update operator $\mynext$ called  the \emph{Bellman operator}. The Bellman operator $\mynext$  back-propagates function values  by one step, using the average. For the simple case of Markov chains shown on the right, it is defined by $(\mynext f)(s)=\sum_{i} p_{i}\cdot f(s_{i})$, turning a function $f\colon S\to [0,1]$ (i.e., assignment of ``scores'' to states) to $\mynext f\colon S\to [0,1]$.

Since $V(\game)$ is the lfp $\mu\mynext$, Kleene's fixed point theorem tells us the sequence 
 \begin{equation}\label{eq:kleeneChainForValueFuncIntro}
  \bot \;\le\; \mynext\bot \;\le\; \mynext^{2}\bot \;\le\; \cdots,
 \end{equation} 
where $\bot$ is the least element of the function space $S\to [0,1]$, converges to $V(\game)=\mu\mynext$. VI consists of the iterative approximation of $V(\game)$ via the sequence~(\ref{eq:kleeneChainForValueFuncIntro}). 

An issue from the practical point of view, however, is that $\mynext^{i}\bot$ never becomes equal to $V(\game)$ in general. Even worse, one cannot know how close the current approximant $\mynext^{i}\bot$ is to the desired function $V(\game)$~\cite{HADDAD2018111}. In summary, VI as an iterative approximation method does not give any precision guarantee.

\subsection{Bounded Value Iteration (BVI) and End Components}
Bounded value iteration (BVI) has been actively studied as an extension of VI that comes with a precision guarantee~\cite{McMahanLG05,Brazdil2014verification,HADDAD2018111,kelmendi2018value,AshokKW19,EisentrautKA19preprint,AshokKW19preprint}. Its core ideas are the following two. 

\begin{wrapfigure}[4]{r}{0.35\textwidth}

\begin{math}
  \vcenter{\xymatrix@R=-.5em@C=1em{
 L_{0}
   \ar@{}[r]|{\le}
 &
 L_{1}
   \ar@{}[r]|{\le}
 &
 \cdots
   \ar@{}[rd]|*[@rd]{\le}
 \\
 &&&
 V(\game)
 \\
 U_{0}
   \ar@{}[r]|{\ge}
 &
 U_{1}
   \ar@{}[r]|{\ge}
 &
 \cdots
   \ar@{}[ru]|*[@ru]{\ge}
 }}
\end{math}
 \hfill
\parbox[c]{0em}{\begin{equation}\label{eq:LandUIntro}
\end{equation}
}\end{wrapfigure}
Firstly, BVI computes not only iterative lower bounds $L_{i}=\mynext^{i}\bot$ for $V(\game)$, but also iterative \emph{upper bounds} $U_{i}$, as shown on the right in~(\ref{eq:LandUIntro}). This gives us a precision guarantee---$V(\game)$ must lie between the approximants $L_{i}$  and $U_{i}$. 

Secondly, for computing upper bounds $U_{i}$, BVI uses the Bellman operator again: $U_{i}=\mynext^{i}\top$ where $\top$ is the greatest element of the  function space $S\to [0,1]$. This leads to the following approximation sequence that is dual to~(\ref{eq:kleeneChainForValueFuncIntro}):
\begin{equation}\label{eq:kleeneChainForValueFuncIntroFromTop}
   \top \;\ge\; \mynext\top \;\ge\; \mynext^{2}\top \;\ge\; \cdots.
\end{equation}
The sequence~(\ref{eq:kleeneChainForValueFuncIntroFromTop}) converges to the \emph{greatest fixed point (gfp)} $\nu \mynext$ of $\mynext$, which must be above the lfp $V(\game)=\mu\mynext$. Therefore the elements in~(\ref{eq:kleeneChainForValueFuncIntroFromTop}) are all above $V(\game)$. 

The problem, however, is that the gfp $\nu \mynext$ is not necessarily the same as $\mu\mynext$. Therefore the upper bounds $U_{0}\ge U_{1}\ge\cdots$ given by~(\ref{eq:kleeneChainForValueFuncIntroFromTop}) may not converge to $V(\game)$. In other words, for a given threshold $\varepsilon >0$, the bounds in~(\ref{eq:LandUIntro}) may fail to achieve $U_{i}-L_{i}\le \varepsilon$, no matter how large  $i$ is.

\begin{wrapfigure}[7]{r}{0.33\textwidth}
\vspace{-3em}

\centering
 \begin{math}
   \xymatrix@1@C-1.5em{%
*+<1pc>[o][F-]{s_{I}}
\ar@/_1pc/[r]_{1}
 &
*+<1pc>[o][F-]{s}
\ar@/_1pc/[l]_{1}
& *+<1pc>[o][F-]{\one} 
& *+<1pc>[o][F-]{\zero} 
}
 \end{math}
\caption{A Markov chain (MC) for which the naive BVI fails to converge}
\label{fig:AMarkovchainforwhichthenaiveBVIfailstoconverge}
\end{wrapfigure}
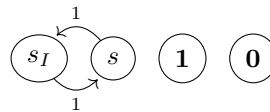
In the literature, the source of this convergence issue has been identified to be \emph{end components (ECs)} in MCs/MDPs/SGs. 
ECs are much like loops without exits---an example is in Fig.~\ref{fig:AMarkovchainforwhichthenaiveBVIfailstoconverge}, where we use a Markov chain (MC) for simplicity.  Any function $f$ that assigns the same value to the states $s_{I}$ and $s$ can be a fixed point of the Bellman operator $\mynext$ (that back-propagates $f$ by averages); therefore, the gfp $\nu \mynext$ assigns $1$ to both $s_{I}$ and $s$. In contrast, $(\mu\mynext)(s_{I})=(\mu\mynext)(s)=0$, which says one never reaches the target $\one$ from $s_{I}$ or $s$ (which is obvious).

Most previous works on BVI have focused on the problem of how to deal with ECs. Their solutions are to get somehow rid of ECs. For example, ECs in MDPs are discovered and \emph{collapsed} in~\cite{Brazdil2014verification,HADDAD2018111}; ECs in SGs cannot simply be collapsed, and an elaborate method is proposed in the recent work~\cite{kelmendi2018value} that \emph{deflates} them. This is the context of the current work, and we aim to enhance BVI for SGs.

\subsection{Contribution: Global Propagation in BVI with Widest Paths}
The algorithms in~\cite{kelmendi2018value} seem to be the only BVI algorithms known for SGs. In their performance, however,  EC computation often becomes a bottleneck. 
Our contribution in this paper is a new BVI algorithm for SGs that is \emph{free from the need for EC computation}. 

The key idea of our algorithm is \emph{global propagation} for upper bounds, as sketched below. In each iteration for upper bounds $U_{0}\ge U_{1}\ge\cdots$, we conduct \emph{global} propagation, in addition to the \emph{local} propagation in the usual BVI. The latter means the application of $\mynext$ to $\mynext^{i}\top$, leading to $\mynext^{i+1}\top$; this local propagation, as we previously discussed, gets trapped in end components. In contrast, our global propagation looks at paths from each state $s$ to the target $\one$,  ignoring end components. For example, in Fig.~\ref{fig:AMarkovchainforwhichthenaiveBVIfailstoconverge}, our global propagation  sees that there is no path from $s_{I}$ to the target $\one$, and consequently assigns $0$ as an upper bound for the value function $V(\game)(s_{I})$. 

Such global propagation is easier said than done---in fact, the very advantage of VI is that the \emph{global} quantities (namely reachability probabilities) get computed by iterations of \emph{local} propagation. Conducting global propagation in a computationally tractable manner requires a careful choice of its venue. The solution in this paper is to compute \emph{widest paths} in a suitable (directed) weighted graph. 

More specifically, in each iteration where we compute an upper bound $U_{i}$, we conduct the following operations.
\begin{itemize}
 \item (Player reduction)
 We  turn the given SG $\game$  into an MDP $\mdp_{i}$, by restricting Minimizer's actions  to the \emph{empirically optimal} ones. The latter means they are optimal with respect to the current under-approximation $L_{i}$ of  $V(\game)$. 
 \item (Local propagation)
 The MDP $\mdp_{i}$ is then turned into a weighted graph (WG) $\wg_{i}$. The construction of $\wg_{i}$ consists of the application of $\mynext$ to the previous bound $U_{i-1}$ (i.e.\ local propagation), and forgetting the information that cannot be expressed in a weighted graph (such as the precise transition probabilities $\delta(s,a,s')$ that depend on the action $a$). 

Due to this information loss, our analysis in $\wg_{i}$ is necessarily approximate. Nevertheless, the benefit of $\wg_{i}$'s simplicity is significant, as in the following step.
 \item (Global propagation)
In the WG $\wg_{i}$, we solve the \emph{widest path problem}. This classic graph-theoretic problem can be solved efficiently, e.g., by the Dijkstra algorithm.  The widest path width gives a new upper bound $U_{i}$.
\end{itemize}

We prove the correctness of our algorithm: soundness ($V(\game)\le U_{i}$), and convergence ($U_{i}\to V(\game)$ as $i\to\infty$). That the upper bounds decrease ($U_{0}\ge U_{1}\ge\cdots$) will be obvious by construction. These correctness proofs are technically nontrivial, combining combinatorial, graph-theoretic, and  analytic arguments. 

We have also implemented our algorithm. Our experiments compare its performance to the algorithms from~\cite{kelmendi2018value} (the original one and its learning-based variation). The results show our consistent performance advantage: depending on SGs, our performance is from comparable to dozens of times faster. The advantage is especially eminent in SGs with many ECs.

\subsection{Related Works}
VI and BVI have been pursued principally for MDPs. The only work we know that deals with SGs is~\cite{kelmendi2018value}---with the exception of~\cite{Ujma15} that works in a restricted setting where every end component belongs exclusively to either player. The work closest to ours is therefore~\cite{kelmendi2018value}, in that we solve the same problem. 

For MDPs, the idea of BVI is first introduced in~\cite{McMahanLG05}; they worked in a limited setting where ECs do not cause the convergence issue. Its extension to general MDPs with the reachability objective is presented in~\cite{Brazdil2014verification,HADDAD2018111}, where ECs are computed and then collapsed. BVI is studied under different names in these works: \emph{bounded real time dynamic programming}~\cite{McMahanLG05,Brazdil2014verification} and \emph{interval iteration}~\cite{HADDAD2018111}. 
 The work~\cite{kelmendi2018value} is an extension of this line of work from MDPs to SGs. 

The work~\cite{kelmendi2018value} has  seen a few extensions to more advanced settings: black-box settings~\cite{AshokKW19}, concurrent reachability~\cite{EisentrautKA19preprint}, and generalized reachability games~\cite{AshokKW19preprint}.

Most BVI algorithms involve EC computation (although ours does not). 
The EC algorithm in~\cite{DeAlfaro:1998:FVP:927475,CourcoubetisY95} is used  in~\cite{kelmendi2018value,HADDAD2018111}; more recent algorithms include~\cite{ChatterjeeDHS19,chatterjee2014efficient}.

\subsection{Organization} 
In~\S{}\ref{sec:preliminaries} we present some preliminaries.
In~\S{}\ref{sec:VIAndBVI} we review 
VI and BVI 
with an emphasis on the role of Kleene's fixed point theorem. This paves the way to~\S{}\ref{sec:ourAlgorithm} where we present our algorithm. We do so in three steps, and prove the correctness---soundness and convergence---in the end. Experiment results are shown in~\S{}\ref{sec:experiment}. 


\section{Preliminaries}\label{sec:preliminaries}
We fix some basic notations. Let $X$ be a set. We let $X^{*}$ denote the set of finite sequences over $X$, that is, $X^{*}=\bigcup_{i\in\nat}X^{i}$. We let $X^{+}=X^{*}\setminus\{\varepsilon\}$, where $\varepsilon$ denotes the empty sequence (of length $0$). The set of infinite sequences over $X$ is denoted by $X^{\omega}$.  The set of functions from $X$ to $Y$ is denoted by  $X\to Y$.

\subsection{Stochastic Games}
In a stochastic game, two players (\emph{Maximizer} $\square$ and \emph{Minimizer} $\bigcirc$) play against each other.
The goals of the two players are to maximize and minimize the  \emph{value function}, respectively. 
Many different definitions are possible for value functions. In this paper (as well as all the works on (bounded) value iteration), we focus on the \emph{reachability objective}, in which case a value function is defined by the reachability probability to a designated target state $\one$. 
\begin{mydefinition}[stochastic game (SG)] \label{def:SG}
	A \textnormal{stochastic game (SG)} is a tuple 
	$\game= (S, S_{\square}, S_{\bigcirc}, s_I,$ 
	$ \mathbf{1}, \mathbf{0}, A, \Av, \delta)$ 
        where
	\begin{itemize}
		\item $S$ is a finite set of \emph{states}, partitioned into $S_{\maximizer}$ and $S_{\minimizer}$ (i.e., $S = S_{\square} \cup S_{\bigcirc}$, $S_{\square} \cap S_{\bigcirc} = \emptyset$).  $s\in S_{\maximizer}$ is \emph{Maximizer's} state; $s\in S_{\minimizer}$ is \emph{Minimizer's} state.
		\item $s_I \in S$ is an \emph{initial} state, $\mathbf{1} \in S_{\square}$ is a \emph{target}, and $\mathbf{0} \in S_{\bigcirc}$ is a \textnormal{sink}. 
		\item $A$ is a finite set of \emph{actions}.
		\item $\Av : S \rightarrow 2^{A}$ defines the set of actions that are \emph{available} at each state $s\in S$. 
		\item $\delta : S \times A \times S \rightarrow [0,1]$ is a \emph{transition function}, where $\delta(s,a,s')$ gives a probability with which to reach the state $s'$ when the action $a$ is taken at the state $s$. The value $\delta(s,a,s')$  is non-zero only if $a \in \Av(s)$; it  must  satisfy $\sum_{s' \in S} \delta(s,a,s') = 1$ for all $s \in S$ and $a \in \Av(s)$.
	\end{itemize}
 We assume that each of $\one$ and $\zero$ allows only one action that leads to a self-loop with probability $1$. 
Moreover, 
for theoretical convenience, we assume that all SGs are non-blocking. That is, $\Av(s)\neq\emptyset$ for each $s\in S$. 

We introduce some notations: $\post(s,a) = \{s' \mid \delta(s
,a,s') > 0\}$, and for $S' \subseteq S$, we  let $S'_{\square} = S' \cap S_{\square}$ and $S'_{\bigcirc} = S' \cap S_{\bigcirc}$.

\end{mydefinition}

\begin{mydefinition}[Markov decision process (MDP), Markov chain (MC)]
 An SG such that $S_\maximizer = S \setminus \{\zero\}$ (i.e.\ Minimizer is absent) is called a \emph{Markov decision process (MDP)}. We often omit the second and third components for MDPs, writing $\mdp=(S, 
s_I,
\mathbf{1}, \mathbf{0}, A, \Av, \delta)$.

An SG such that $|\Av(s)| = 1$ for each $s \in S$---both Maximizer and Minimizer are absent---is called a \emph{Markov chain (MC)}. It is also denoted simply by a tuple $\game = (S, s_I, \mathbf{1}, \mathbf{0}, \delta)$ where its transition function is of the type $\delta:S\times S \to [0,1]$. 

Every notion for SGs that appears below applies to MDPs and MCs, too.
\end{mydefinition}

\begin{myexample}
Fig.~\ref{fig:ex_sg} presents an example of an SG. 
At the state $s_1$ of Minimizer, 
two actions $\alpha$ and $\beta$ are in $\Av(s_1)$. 
If Minimizer chooses $\alpha$, the next state is $s_2$ with probability $\delta(s_1,\alpha,s_2) = 1$. 
If Minimizer instead chooses $\beta$, the next state is 
$\mathbf{1}$ with probability $\delta(s_1,\beta,\mathbf{1}) = 0.8$ or $\mathbf{0}$ with probability $\delta(s_1,\beta,\mathbf{0}) = 0.2$. 

Maximizer's goal is to reach $\one$ as often as possible by choosing suitable actions. Minimizer's goal is to avoid reaching $\one$---this can be achieved, for example but not exclusively, by reaching $\zero$.
\end{myexample}

\begin{wrapfigure}[7]{R}{0.55\textwidth}
\vspace*{-3.5em}

 \centering
\scalebox{.7}{    \begin{tikzpicture}[square/.style={regular polygon,regular polygon sides=4}]
        \node[state] at (0,0) (s1) {$s_1$};
        \node[state, square, initial] at (-150:2.5) (s0) {$s_I$};
        \node[state, square] at (0,-2.5) (s2) {$s_2$};
        \node[state, square, minimum size=1cm] at (2.5,0) (target) {$\mathbf{1}$};
        \node[state] at (2.5,-2.5) (sink) {$\mathbf{0}$};
        \draw
        (target) edge[loop right] node[pos=0.3,above right]{$\alpha$} node[pos=0.75,below right]{1} (target)
        (sink)   edge[loop right] node[pos=0.3,above right]{$\alpha$} node[pos=0.75,below right]{1} (sink)
        (s0)     edge[above, bend left=20] node[pos=0.2,above left]{$\alpha$} node[pos=0.75]{1} (s1)
        (s0)     edge[above, bend right=20] node[pos=0.1,right]{$\gamma$} node[pos=0.75,below]{0.6} (s2)
        (s1)     edge node[pos=0.2,right]{$\alpha$} node[pos=0.7,right]{1} (s2)
        (s1)     edge node[pos=0.2,above]{$\beta$} node[pos=0.7,above]{0.8} (target)
        (s2)     edge node[pos=0.2,above]{$\alpha$} node[pos=0.7,above]{0.1} (sink)
        (s2)     edge[bend right=40] node[pos=0.2,below]{$\beta$} node[pos=0.7,below]{1} (sink);
        \draw[->] (-1,-1.25) to node[sloped,above]{0.7} (s1);
        \draw[->] (-1,-1.25) to node[sloped,above]{0.3} (s2);
        \draw[->] (-1.2,-2.18) to[bend left=50] node[below]{0.4} (s0);
        \draw[-] (s0) to[above] node[above]{$\beta$} (-1,-1.25);
        \draw[->] (1,0) to node[sloped,above,pos=0.3]{0.2} (sink);
        \draw[->] (1,-2.5) to node[sloped,above,pos=0.3]{0.9} (target);
    \end{tikzpicture}}
    \caption{A stochastic game (SG), an example}
    \label{fig:ex_sg}
\end{wrapfigure}
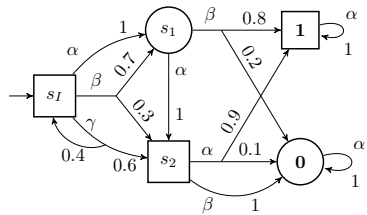

Both players choose their actions according to their \emph{strategies}. It is well-known~\cite{Condon92} that \emph{positional} (also called \emph{memoryless}) and \emph{deterministic} (also called \emph{pure}) strategies are complete for finite SGs with the reachability objective. 
\begin{mydefinition}[strategy, path]\label{def:strategyPath}
Let $\game$ be the SG in Def.~\ref{def:SG}. 
 A  \emph{strategy} for Maximizer in $\game$ is  a function $\sigma\colon S_{\square} \to A$ such that $\sigma(s)\in \Av(s)$ for each $s\in S_{\maximizer}$. A   \emph{strategy} for Minimizer is defined similarly. 
The set of Maximizer's strategies in $\game$ is denoted by
$\str^{\game}_{\maximizer}$; that of Minimizer's is denoted by
 $\str^{\game}_{\minimizer}$.

Strategies  $\tau\in \str^{\game}_{\maximizer}$ and $\sigma\in \str^{\game}_{\minimizer}$ 
 in $\game$ turn the game $\game$ into a Markov chain, which is denoted by $\game^{\tau,\sigma}$. Similarly, a strategy $\tau$ for Maximizer (who is the only player) in an MDP $\mdp$ induces an MC, denoted by $\mdp^{\tau}$.

An \emph{infinite path} in $\game$ is a sequence $s_0 a_0 s_1 a_1 s_2 a_2 \ldots \in (S \times A)^\omega$ such that for all $i \in \nat$, $a_i \in \Av(s_i)$ and $s_{i+1} \in \post(s_i,a_i)$. 
A prefix $s_0 a_0 s_1 \ldots s_k$ of an infinite path ending with a state is called a \emph{finite path}. 
If $\game$ is an MC, then we omit actions in a path and write $s_0s_1s_2\ldots$ or $s_0s_1 \ldots s_k$.

Given a game $\game$ and strategies $\tau,\sigma$ for the two players, the induced MC $\game^{\tau,\sigma}$ assigns to each state $s\in S$ a 
 probability distribution $\mathbb{P}_s^{\tau,\sigma}$. The distribution is with respect to the standard measurable structure of $S^{\omega}$; see, e.g.,~\cite[Chap.~10]{baier2008principles}. For each measurable subset $X \subseteq S^\omega$, $\mathbb{P}_s^{\tau,\sigma}(X)$ is the probability with which
$\game^{\tau,\sigma}$, starting from the state $s$, produces an infinite path $\pi$ that belongs to $X$.
\end{mydefinition}

It is well-known that all the LTL properties are measurable in $S^{\omega}$. In the current setting with the reachability objective, we are interested in the probability of eventually reaching $\one$, denoted by $\mathbb{P}_s^{\tau,\sigma}(\Diamond\one)$. 
\begin{mydefinition}[value function $V(\game)$] \label{def:valueFunc}
Let $\game$ be the SG in Def.~\ref{def:SG}. The \emph{value function} $V(\game)$ of $\game$ is defined by 
\begin{displaymath}
 V(\game)(s) = 
\max_{\tau\in \str^{\game}_{\maximizer}}
\min_{\sigma\in \str^{\game}_{\minimizer}}
\mathbb{P}_s^{\tau,\sigma}(\Diamond\one)
=
\min_{\sigma\in \str^{\game}_{\minimizer}}
\max_{\tau\in \str^{\game}_{\maximizer}}
\mathbb{P}_s^{\tau,\sigma}(\Diamond\one),
\end{displaymath}
where the last equality is shown in~\cite{Condon92}.

We say a strategy $\tau$ of Maximizer's is \emph{optimal} if $V(\game)(s) = \min_\sigma \prob_s^{\tau,\sigma}(\lozenge \one)$ for each $s\in S$; similarly, we say a strategy $\sigma$ of  Minimizer's is \emph{optimal} if $V(\game)(s) = \max_\sigma \prob_s^{\sigma,\tau}(\lozenge \one)$ for each $s \in S$.

We write $V$ for $V({\game})$ when the dependence on $\game$ is clear from the context. 

The set of states with a non-zero value is denoted by $S_{\Diamond\one}$. That is, $S_{\Diamond\one}=\{s\in S\mid V(\game)(s)> 0\}$. 
\end{mydefinition}

\begin{myexample}
Consider the SG $\game$ from Fig.~\ref{fig:ex_sg}. At $s_2$, Maximizer's action should be $\alpha$. Hence, $V(\game)(s_2) = 0.9$. At $s_1$, if Minimizer chooses $\alpha$, then the probability of reaching $\mathbf{1}$ will be $0.9$ by $V(\game)(s_2)$. Thus, Minimizer should choose $\beta$ at $s_1$, which yields $V(\game)(s_1) = 0.8$. Finally, at $s_I$, $\gamma$ is the best choice, since Maximizer can choose this action infinitely often until it gets to $s_2$. We have $V(\game)(s_I) = 0.9$. 
\end{myexample}

\subsection{The Widest Path Problem}\label{subsec:wpp}
\begin{mydefinition}[weighted graph (WG)]\label{def:weightedGraph}
 A (directed) \emph{weighted graph} is a triple $\wg=(V,E,w)$ of a finite set $V$ of \emph{vertices}, a set $E\subseteq V\times V$ of \emph{edges}, and a \emph{weight function} $w\colon E\to [0,1]$ where $[0,1]$ is the unit interval.

 A (finite) \emph{path} in a WG is defined in the usual graph-theoretic way. 
\end{mydefinition}

In the widest path problem, an edge weight $w(v,v')$ is thought of as its capacity, and the capacity of a path is determined by its bottleneck. The problem asks for a path with the greatest capacity.
In this paper, we use the following \emph{all-source single-destination} version of the problem. 
\begin{mydefinition}[the widest path problem (WPP)]\label{def:wpp}
A (finite) \emph{path} in $\wg=(V,E,w)$ is a sequence $v_{0}v_{1}\dotsc v_{n}$ of vertices such that $(v_{i},v_{i+1})\in E$ for each $i\in [0,n-1]$. The \emph{width} of a path $v_{0}v_{1}\dotsc v_{n}$  is given by $\min_{i\in [0,n-1]}w(v_{i},v_{i+1})$. 

The \emph{widest path problem} is the following problem.
\begin{description}
 \item[Given:] a WG $\wg=(V,E,w)$ and a target vertex $v_{\mathrm{t}}\in V.$
 \item[Answer:] 	    for each  $v\in V$, the widest width of the paths from $v$ to $v_{\mathrm{t}}$,  that is, 
	    \begin{displaymath}
	     \max_{n\in \nat, v=v_{0},v_{1},\dotsc, v_{n}=v_{\mathrm{t}}} \min_{i\in [0,n-1]} w(v_{i},v_{i+1}),
	    \end{displaymath}
\end{description}
	We let $\wpw(\wg,v_{\mathrm{t}})$ denote a function that solves this problem, and let $\wpath(\wg,v_{\mathrm{t}})$ denote a function that assigns to each $v\in V$ a widest path to $v_{\mathrm{t}}$. Furthermore, we assume the following property of $\wpath$: if $\wpath(\wg,v_{\mathrm{t}})(v_{0})=v_{0}v_{1}\dotsc v_{k}v_{\mathrm{t}}$, then
$\wpath(\wg,v_{\mathrm{t}})(v_{i})=v_{i}v_{i+1}\dotsc v_{k}v_{\mathrm{t}}$ for each $i\in [0,k]$.
\end{mydefinition}
Efficient algorithms are known for  $\wpw(\wg,v_{\mathrm{t}})$. An example is the Dijkstra search algorithm with Fibonacci heaps~\cite{FredmanT87}; it is originally for the single-source all-destination version but its adaptation is easy. The algorithm runs in time $O(|E| + |V|\log|V|)$. It returns a widest path in addition to its width, too, computing the function $\wpath(\wg,v_{\mathrm{t}})$ with the  property required in the above.

\section{(Bounded) Value Iteration}\label{sec:VIAndBVI}
\subsection{Bellman Operator and Value Iteration}
The following construct---used for ``local propagation'' in computing the value function---is central to formal analysis of probabilistic systems and games.

\begin{mydefinition}[Bellman Operator]\label{def:bellman}
Let $\game
=(S, S_{\square}, S_{\bigcirc}, s_I,
	 \mathbf{1}, \mathbf{0}, A, \Av, \delta)
$ be a stochastic game. 
For each state $s \in S$, an available action $a \in \Av(s)$, and $f:S \to [0,1]$, we  define a function  $\mynext_a f:S \to [0,1]$ by the following. 
$$
    (\mynext_a f)(s) = 
  \begin{cases} 
   1 & \text{if } s = \mathbf{1}, \\
   0 & \text{if } s = \mathbf{0}, \\
   \sum_{s' \in S} \delta(s,a,s') \cdot f(s') & \text{if } s \neq \zero, \one.
  \end{cases}
$$
These functions are used in the following definition of the \emph{Bellman operator} $\mynext\colon
(S\to [0,1])\to (S\to [0,1])$ over $\game$:
$$
    (\mynext f)(s) = 
  \begin{cases} 
   \max_{a \in \Av(s)} (\mynext_a f)(s) & \text{if } s \in S_\square \text{ is a Maximizer state,}\\
   \min_{a \in \Av(s)} (\mynext_a f)(s) & \text{if } s \in S_\bigcirc \text{ is a Minimizer state.}
  \end{cases}
$$
\end{mydefinition}

The function space $S\to [0,1]$ inherits
 the usual order $\le$ between real numbers in the unit interval $[0,1]$, that is, $f\le g$ if  $f(s)\le g(s)$ for each $s\in S$. The Bellman operator $\mynext$ over $S\to [0,1]$ is clearly monotone; it is easily  seen to preserve $\max$ and $\min$, using the fact that the state space $S$ of an SG is finite. Therefore we obtain the following, as consequences of Kleene's fixed point theorem. 
\begin{mylemma}\label{lem:bellmanFixedPt}
 Assume the setting of Def.~\ref{def:bellman}. 
 \begin{enumerate}
  \item\label{item:bellmanGreatest} The Bellman operator $\mynext$ has the greatest fixed point (gfp) $\nu \mynext\colon S\to [0,1]$. It is obtained as the limit of the descending $\omega$-chain
 \begin{displaymath}
  \top \;\ge\; \mynext\top \;\ge\; \mynext^{2}\top \;\ge\; \cdots,
 \end{displaymath}
 where $\top$ is the greatest element of $S\to [0,1]$ (i.e., $\top(s)=1$ for each $s\in S$). In other words, we have 
 \begin{math}
  (\nu\mynext)(s)=\inf_{i\in \nat}\bigl((\mynext^{i}\top)(s) \bigr)
 \end{math} for each $s\in S$. 
  \item\label{item:bellmanLeast} Symmetrically, $\mynext$ has the least fixed point (lfp) $\mu \mynext\colon S\to [0,1]$, obtained as the limit of the ascending chain 
 \begin{equation}\label{eq:kleeneChainForValueFunc}
  \bot \;\le\; \mynext\bot \;\le\; \mynext^{2}\bot \;\le\; \cdots,
 \end{equation} where $\bot(s)=0$ for each $s\in S$. That is, we have
 \begin{math}
  (\mu\mynext)(s)=\sup_{i\in \nat}\bigl((\mynext^{i}\bot)(s) \bigr)
 \end{math} for each $s\in S$. \qed
  \end{enumerate}
\end{mylemma}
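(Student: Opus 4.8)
The plan is to derive both statements from Kleene's fixed-point theorem, so the bulk of the work is checking its hypotheses for $\mynext$ on the function space $S\to[0,1]$. First I would record that, ordered pointwise, $S\to[0,1]$ is a complete lattice: $[0,1]$ is a complete lattice and $S$ is finite, so all pointwise suprema and infima exist, with least element $\bot$ and greatest element $\top$ as in the statement. Monotonicity of $\mynext$ is already noted in the text, so the two ingredients still to be supplied are $\omega$-cocontinuity (preservation of infima of descending $\omega$-chains), which yields the gfp, and its dual $\omega$-continuity (preservation of suprema of ascending $\omega$-chains), which yields the lfp.

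For item~\ref{item:bellmanGreatest} I would proceed as follows. Cocontinuity reduces, state by state, to commuting $\inf_i$ past the finite sum $\sum_{s'\in S}\delta(s,a,s')\,f(s')$ and past the finite $\max$/$\min$ over $\Av(s)$; both interchanges are legitimate because $S$ and each $\Av(s)$ are finite (the only delicate point, discussed below). Granting this, the chain $\top\ge\mynext\top\ge\mynext^2\top\ge\cdots$ is descending: the base step $\mynext\top\le\top$ is trivial, and monotonicity propagates it by induction. Writing $g(s)=\inf_{i\in\nat}(\mynext^i\top)(s)$ for its pointwise infimum, cocontinuity gives $\mynext g=\mynext(\inf_i\mynext^i\top)=\inf_i\mynext^{i+1}\top=g$, so $g$ is a fixed point. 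It is the greatest one: any fixed point $h$ satisfies $h\le\top$, hence $h=\mynext^i h\le\mynext^i\top$ for every $i$ by monotonicity, and taking $\inf_i$ gives $h\le g$. Thus $g=\nu\mynext$, with the stated characterization.

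Item~\ref{item:bellmanLeast} is completely dual: the same finiteness argument gives $\omega$-continuity, the chain $\bot\le\mynext\bot\le\cdots$ is ascending from $\mynext\bot\ge\bot$, its pointwise supremum is a fixed point by continuity, and it lies below every fixed point, so it equals $\mu\mynext$. No separate ideas are needed here.

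The one genuinely non-formal step, and hence the main obstacle, is the interchange of the chain limit with the pointwise $\max$ (dually $\min$) over actions, since $\max$ and $\inf$ do not commute in general. Here I would invoke finiteness of $\Av(s)$ explicitly: writing $f_i=\mynext^i\top$, each value $(\mynext_a f_i)(s)$ is nonincreasing in $i$, and since only finitely many actions are available at $s$, some single action $a^\ast$ attains $\max_a(\mynext_a f_i)(s)$ for infinitely many $i$; evaluating the monotone limit along that subsequence yields $\inf_i\max_a(\mynext_a f_i)(s)=\max_a\inf_i(\mynext_a f_i)(s)$. The interchange with the finite sum $\sum_{s'}$ is merely the elementary fact that a finite sum of convergent sequences converges to the sum of the limits. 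Everything remaining is routine lattice-theoretic bookkeeping.
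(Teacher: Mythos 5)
Your proposal is correct and follows exactly the route the paper intends: the paper states this lemma without a detailed proof, justifying it in the surrounding text by monotonicity of $\mynext$ and its preservation of $\max$ and $\min$ (i.e., $\omega$-(co)continuity, using finiteness of $S$ and $\Av(s)$), then invoking Kleene's fixed point theorem. Your elaboration of the $\inf$/$\max$ interchange via a single action attaining the maximum infinitely often is a correct filling-in of the one detail the paper leaves implicit.
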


The following characterization is fundamental. See, e.g.,~\cite{ChatterjeeH08}.
\begin{mytheorem}\label{thm:valueFuncAsFixedPt}
 Let  $\game$ be a stochastic game. 
 The value function $V=V(\game)$  (Def.~\ref{def:valueFunc}) coincides with the least fixed point $\mu \mynext$. \qed
\end{mytheorem}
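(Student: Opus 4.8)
The plan is to prove the two inequalities $\mu\mynext \le V$ and $V \le \mu\mynext$ separately, and conclude by antisymmetry of $\le$. Throughout I would use the characterisation $\mu\mynext = \sup_{n}\mynext^{n}\bot$ from Lemma~\ref{lem:bellmanFixedPt}, together with the determinacy of $\game$ (the equality of $\max$-$\min$ and $\min$-$\max$ recorded in Def.~\ref{def:valueFunc}).

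For $\mu\mynext \le V$, I would pass through the finite-horizon values. Writing $V_{n}(s)=\max_{\tau}\min_{\sigma}\prob_{s}^{\tau,\sigma}(\Diamond^{\le n}\one)$ for the optimal probability of reaching $\one$ within $n$ steps, a straightforward backward induction on $n$ gives $\mynext^{n+1}\bot = V_{n}$: the base case is $\mynext\bot = V_{0}$ (both equal the indicator of $\one$), and the inductive step is exactly the one-step unfolding $V_{n+1}=\mynext V_{n}$ built into $\mynext$. This step is clean precisely because finite-horizon games are solved by plain backward induction and carry no fixed-point subtlety. Since $\Diamond^{\le n}\one \subseteq \Diamond\one$ as events, $\prob_{s}^{\tau,\sigma}(\Diamond^{\le n}\one)\le \prob_{s}^{\tau,\sigma}(\Diamond\one)$ for all $\tau,\sigma$, and monotonicity of $\max_{\tau}\min_{\sigma}$ yields $V_{n}\le V$. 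Taking the supremum over $n$ gives $\mu\mynext=\sup_{n}\mynext^{n}\bot\le V$.

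The reverse inequality $V \le \mu\mynext$ is the crux, and is exactly where the gap between least and greatest fixed points (hence the whole EC phenomenon) must be handled. Set $u:=\mu\mynext$, so $\mynext u = u$. I would use $u$ to synthesise a Minimizer strategy $\sigma_{u}$ by choosing, at each $s\in S_{\minimizer}$, an action $\sigma_{u}(s)\in\argmin_{a\in\Av(s)}(\mynext_{a}u)(s)$. The key observation is that $u$ then becomes a superharmonic certificate in every Markov chain $\game^{\tau,\sigma_{u}}$ induced by an arbitrary Maximizer strategy $\tau$: at a Minimizer state the choice of $\sigma_{u}$ gives $\sum_{s'}\delta(s,\sigma_{u}(s),s')\,u(s') = (\mynext u)(s)=u(s)$, while at a Maximizer state $\sum_{s'}\delta(s,\tau(s),s')\,u(s')=(\mynext_{\tau(s)}u)(s)\le(\mynext u)(s)=u(s)$, and $u(\one)=1$, $u(\zero)=0$. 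A short induction on the horizon then shows $\prob_{s}^{\tau,\sigma_{u}}(\Diamond^{\le n}\one)\le u(s)$ for every $n$, and continuity of measure along the increasing union $\Diamond\one=\bigcup_{n}\Diamond^{\le n}\one$ upgrades this to $\prob_{s}^{\tau,\sigma_{u}}(\Diamond\one)\le u(s)$. As $\tau$ is arbitrary, $\max_{\tau}\prob_{s}^{\tau,\sigma_{u}}(\Diamond\one)\le u(s)$, and the $\min$-$\max$ form of the value gives $V(s)\le u(s)$.

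I expect this third step to be the main obstacle. The two delicate points are: (i) that the superharmonicity of $u$ bounds the \emph{unbounded} reachability probability—this is precisely where a naive appeal to the greatest fixed point would fail in the presence of end components, yet the argument goes through here because it only uses that $u$ is \emph{a} fixed point, so it correctly certifies an upper bound even for the \emph{least} one; and (ii) the bookkeeping between the $\max$-$\min$ and $\min$-$\max$ orderings, for which I rely on determinacy. Once the finite-horizon identity $\mynext^{n+1}\bot=V_{n}$ is set up, the remaining estimates are routine inductions.
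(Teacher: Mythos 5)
The paper does not actually prove this theorem---it states it with a \verb|\qed| and defers to the literature (``See, e.g.,~\cite{ChatterjeeH08}''), so there is no in-paper argument to compare against. Your proposal is a correct, self-contained proof, and it is essentially the standard one from the cited sources: the direction $\mu\mynext\le V$ via finite-horizon values, and the direction $V\le\mu\mynext$ via a greedy Minimizer strategy $\sigma_{u}$ under which $u=\mu\mynext$ is superharmonic in every induced MC, so that an induction on the horizon plus continuity of measure bounds $\prob_{s}^{\tau,\sigma_{u}}(\Diamond\one)$ by $u(s)$; your remark that this step only uses that $u$ is \emph{a} fixed point (so it certifies the least one despite end components) is exactly the right observation.

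One small bookkeeping point to tighten: you define $V_{n}(s)=\max_{\tau}\min_{\sigma}\prob_{s}^{\tau,\sigma}(\Diamond^{\le n}\one)$ with $\tau,\sigma$ ranging over the paper's \emph{positional} strategies (Def.~\ref{def:strategyPath}), but the clean identity $\mynext^{n+1}\bot=V_{n}$ from backward induction holds for the finite-horizon game with step-dependent strategies; finite-horizon optima need not be attained positionally. This is harmless for your purpose, since you only need the chain $\mynext^{n+1}\bot=(\text{$n$-step value over general strategies})\le(\text{infinite-horizon value over general strategies})=V$, where the last equality is the positional determinacy the paper already assumes from~\cite{Condon92}---but as written the intermediate object should be the general-strategy finite-horizon value, not the positional one. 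With that phrasing fixed, the argument is complete.
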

The fact that $V(\game)$ is the least fixed point of $\mynext$ implies the following:
a strategy $\tau$ of Maximizer is optimal if and only if $\bigl(\mynext_{\tau(s)} \bigl(V(\game)\bigr)\bigr)(s) =V(\game)(s)$ holds for each $s \in S_\square$;  similarly for Minimizer.
We say $a \in \Av(s)$ is \emph{optimal}  at $s$ if $\mynext_a V(\game)(s) =V(\game)(s)$ holds; otherwise $a$ is  \emph{suboptimal}.

Lem.~\ref{lem:bellmanFixedPt}.\ref{item:bellmanLeast} \& Thm.~\ref{thm:valueFuncAsFixedPt} suggest iterative \emph{under-}approximation of $V(\game)$
 by $\bot\le \mynext\bot\le \mynext^{2}\bot\le \cdots$. This is the principle of \emph{value iteration} (VI); see Algorithm~\ref{algo:VI}.

	\begin{algorithm}[t]
	\caption{Value iteration (VI) for a stochastic game   $\game= (S,S_{\maximizer}, S_{\minimizer},s_I,\mathbf{1},\mathbf{0},A,\Av,\delta)$ and a stopping threshold $\Delta > 0$}\label{algo:VI}

		\DontPrintSemicolon
		\SetKwProg{Pn}{procedure}{}{}
		\Pn{\textnormal{VI(}$ \game, \Delta$\textnormal{)}}{
			$L_{0} \leftarrow \bot$ 
                        \tcp*[r]{Initialize lower bound}
			\While(\tcp*[f]{Typical stopping criterion}){$L_{i}(s_{I})-L_{i-1}(s_{I})<\Delta$}{
                                $i$++ \\
				$L_{i} \leftarrow \mynext L_{i-1}$ \tcp*[r]{Bellman update}
			}
			\Return{$L_{i}(s_I)$}
		}
	\end{algorithm}

\begin{myexample} \label{ex_L}
The values $L_i$ computed by  Algorithm~\ref{algo:VI}, for the SG in Fig.~\ref{fig:ex_sg}, are shown in the following table. 
The values at $\zero$ and $\one$ are omitted.  
%
\begin{displaymath}
\scalebox{.8}{     \begin{tabular}{C{1cm}||C{1cm}|C{1cm}|C{1cm}|C{1cm}|C{1cm}|C{1.2cm}|C{1cm}||C{1cm}}
 $s$ & $L_{0}$ & $L_1$ & $L_2$ & $L_3$ & $L_4$ & $L_5$ & ... & $V(\game)$ \\ \hline\hline
 $s_I$             & 0           & 0           & 0.54        & 0.83        & 0.872       & 0.8888      &           & 0.9        \\ \cline{1-7} \cline{9-9} 
 $s_1$             & 0           & 0           & 0.8         & 0.8         & 0.8         & 0.8         & ...             & 0.8        \\ \cline{1-7} \cline{9-9} 
 $s_2$             & 0           & 0.9         & 0.9         & 0.9         & 0.9         & 0.9         &              & 0.9        
 \end{tabular}
}\end{displaymath}
 $L_i(s_I)$ converges to, but is never equal to, $V(\game)(s_I)$. 
The converges rate  can be arbitrarily slow: for any $\varepsilon \in (0,1)$ and $k \in \nat$ there is an SG $\game$ and a state $s$ such that $V(\game)(s) - L_k(s) >\varepsilon$. 
One sees this by modifying Fig.~\ref{fig:ex_sg}
with $\delta(s_I, \gamma, s_2) = \varepsilon'$ and $\delta(s_I, \gamma, s_I) = 1- \varepsilon'$, where $\varepsilon' > 0$ is an arbitrary small positive constant.
\end{myexample}

 There is no known stopping criterion for VI (Algorithm~\ref{algo:VI}) with a precision guarantee, besides the one in~\cite{ChatterjeeH08} that is too pessimistic to be practical. The one shown in Line~3 (``little progress'') is a  commonly used heuristic, but it is known to lead to arbitrarily wrong results~\cite{HADDAD2018111}.


\subsection{Bounded Value Iteration }\label{subsec:bvi}

When we turn back to  Lem.~\ref{lem:bellmanFixedPt},  Lem.~\ref{lem:bellmanFixedPt}.\ref{item:bellmanGreatest} suggests another iterative approximation, namely \emph{over-}approximation of the value function $V$ by $\top\ge \mynext\top\ge \mynext^{2}\top\ge \cdots$. The chain converges to the gfp $\nu \mynext$ that is necessarily above the lfp $\mu\mynext$. This is the principle that underlies \emph{bounded value iteration} (BVI); see Algorithm~\ref{algo:BVI} for its naive prototype. BVI has been actively studied in the literature~\cite{McMahanLG05,Brazdil2014verification,HADDAD2018111,kelmendi2018value,AshokKW19,EisentrautKA19preprint,AshokKW19preprint}, sometimes under different names (such as \emph{bounded real time dynamic programming}~\cite{McMahanLG05,Brazdil2014verification} or \emph{interval iteration}~\cite{HADDAD2018111}).

	\begin{algorithm}[t]
	\caption{Bounded value iteration (BVI) for a stochastic game   $\game
= (S,S_{\maximizer}, S_{\minimizer},s_I,\mathbf{1},\mathbf{0},A,\Av,\delta)$ and a stopping threshold $\varepsilon > 0$---a naive prototype that suffers from end components}\label{algo:BVI}

		\DontPrintSemicolon
		\SetKwProg{Pn}{procedure}{}{}
		\Pn{\textnormal{VI(}$\game, \varepsilon$\textnormal{)}}{
			$L_{0} \leftarrow \bot$, $U_{0} \leftarrow \top$ \tcp*[r]{Initialize lower and upper bound}
			\While(\tcp*[f]{Check the gap at the initial state}){$U_{i}(s_I) - L_{i}(s_I) > \varepsilon$}{
                                $i$++ \\
				$L_{i} \leftarrow \mynext L_{i-1}, \; U_{i} \leftarrow \mynext U_{i-1}$ \tcp*[r]{Bellman update}
			}
			\Return{$L_{i}(s_I)$}
		}
	\end{algorithm}

BVI comes with a precision guarantee: since $V(\game)$ lies between $L_{i}$ and $U_{i}$ (whose gap is at most $\varepsilon$), the approximation $L_{i}$ is at most $\varepsilon$ apart from $V(\game)$. 

The catch, however, is that $\mu\mynext$ and $\nu\mynext$ may not coincide, and therefore the overapproximation might not converge to  the desired $\mu \mynext$. This means Algorithm~\ref{algo:BVI} might not terminate. This is the main technical challenge addressed in the previous works on BVI, including~\cite{Brazdil2014verification,kelmendi2018value}. 

In those works, the source of the failure of convergence is identified to be \emph{end components}. See the (very simple) Markov chain in Fig.~\ref{fig:AMarkovchainforwhichthenaiveBVIfailstoconverge}, where the reachability probability from $s_{I}$ to $\one$ is clearly $0$. However, due to the loop between $s_{I}$ and $s$, the values $U_{i}(s_{I})$ and $U_{i}(s)$---these get updated to the average of $U_{i-1}$ at successors---are easily seen to remain $1$. Roughly speaking, end components generalize such loops defined in MDPs and SGs (the definitions are graph-theoretic, in terms of strongly connected components). End components cause non-convergence of naive BVI, essentially for the reason  we just described. 

The solutions previously proposed to this challenge have been to ``get rid of end components.'' For MDPs (1.5 players), the \emph{collapsing} technique  detects end components and collapses each of them into a single state~\cite{Brazdil2014verification,HADDAD2018111}. After doing so, the Bellman operator $\mynext$ has a unique fixed point (therefore $\mu\mynext=\nu\mynext$), assuring convergence of BVI (Algorithm~\ref{algo:BVI}). In the case of SGs (2.5 players), end components cannot simply be collapsed into single states---they must be handled carefully, taking the ``best exits'' into account. This is the key idea of the \emph{deflating} technique proposed for SGs in~\cite{kelmendi2018value}.

\section{Our Algorithm: Bounded Value Iteration with Upper Bounds Given by Widest Paths}\label{sec:ourAlgorithm}
In our algorithm,  like in other BVI algorithm, we iteratively construct upper and lower bounds $U_{i}, L_{i}$ of the value function $V(\game)$ at the same time. See~(\ref{eq:LandUIntro}).
 In updating $U_{i}$, however, we go beyond the \emph{local} propagation by the Bellman update and conduct \emph{global} propagation, too. This frees us from the curse of end components. 
The outline of our algorithm is as follows.
\begin{itemize}
 \item The lower bound $L_{i}$ is given by $L_{i}=\mynext^{i}\bot$, following Lem.~\ref{lem:bellmanFixedPt}.\ref{item:bellmanLeast} and Thm.~\ref{thm:valueFuncAsFixedPt}. This is the same as the other VI algorithms. 
 \item The upper bounds $U_{i}$ is constructed in the following three steps, using a \emph{global} propagation that takes advantage of fast widest path algorithms.
       \begin{itemize}
	\item \textbf{(Player reduction)} Firstly, we turn the SG $\game$ into an MDP $\mdp_{i}$ by fixing Minimizer's strategy to a specific one $\sigma_{i}$.  

Any choice of $\sigma_{i}$ would do for the sake of \emph{soundness} (that is, $V(\game)\le U_{i}$). However, for \emph{convergence} (that is, $U_{i}\to V(\game)$ as $i\to\infty$),  it is important to have $\sigma_{0}, \sigma_{1},\dotsc$ eventually converge to Minimizer's optimal strategy $\sigma_{\minimizer}$. Therefore we let $L_{i}$---the current lower estimate of $V(\game)$---induce $\sigma_{i}$. Recall that $L_{i}$ converges to $V(\game)$ (Lem.~\ref{lem:bellmanFixedPt}.\ref{item:bellmanLeast}, Thm.~\ref{thm:valueFuncAsFixedPt}). 
	\item  \textbf{(Preprocessing by local propagation)} Secondly, we turn the MDP $\mdp_{i}$ into a weighted graph (WG) $W_{i}$. 

 The construction here is \emph{local} propagation of the previous upper bound $U_{i-1}$, from each state $s$ to its predecessors in $\mdp_{i}$. This is much like an application of the Bellman operator $\mynext$.
	\item \textbf{(Global propagation by widest paths)} Finally, we solve the widest path problem in the WG $W_{i}$, from each state $s$ to the target state $\one$.  The maximum path width from $s$ to $\one$ is used as the value of the upper bound  $U_{i}(s)$. 

This way, we conduct \emph{global} propagation of upper bounds, for which end components pose no threats. Our global propagation is still computationally feasible, thanks to the preprocessing in the previous step that turns a problem on an MDP into one on a WG (modulo some sound approximation).
       \end{itemize}
\end{itemize}
The use of \emph{global} propagation for upper bounds is a distinguishing feature of our algorithm.
This is unlike other BVI algorithms (such as~\cite{Brazdil2014verification,kelmendi2018value}) where upper-bound propagation is only local and stepwise. The latter gets trapped when it encounters an EC---therefore some trick such as collapsing~\cite{Brazdil2014verification} and deflating~\cite{kelmendi2018value} is needed---while our global propagation looks directly at the target state $\one$.

The above outline is presented as pseudocode in Algorithm~\ref{algo:ourAlgorithm}. We describe the three steps in the rest of the section. In particular, we exhibit
the definitions of $\playerReduction $ and $\localPropagation $ ($\wpw $ has been defined and discussed in Def.~\ref{def:wpp}), providing some of their properties towards the correctness proof of the algorithm (\S{}\ref{subsec:soundnessAndConvergence}).

\begin{algorithm}[tbp]
	\caption{Our BVI algorithm via widest paths. Here  $\game= (S,S_{\maximizer}, S_{\minimizer},s_I,\mathbf{1},\mathbf{0},A,\Av,\delta)$ is an SG; $\varepsilon > 0$ is a stopping threshold.}\label{algo:ourAlgorithm}
	\DontPrintSemicolon
	\SetKwProg{Pn}{procedure}{}{}
	\Pn{\textnormal{BVI\_WP(}$
                                   \game, \varepsilon
                                   $\textnormal{)}}{
		$L_0 \leftarrow \bot, \; U_0 \leftarrow \top, \; i \leftarrow 0$
		\BlankLine
		\While{$U_i(s_I) - L_i(s_I) > \varepsilon$\label{algline:loopstart}}{
			$i$++ \\
                        $L_i \leftarrow \mynext L_{i-1}$ 	
\label{algline:VI}
\tcp*{value iteration for lower bounds}
                        $\mdp_{i} \leftarrow \playerReduction (\game, L_{i})$\label{algline:playerReduction} 
\tcp*{player reduction}
			$\wg_{i} \leftarrow \localPropagation (\mdp_{i}, U_{i-1})$\label{algline:localPropagation} 
\tcp*{local propagation}
			$U_i \leftarrow \min\{U_{i-1}, \wpw  (\wg_{i})\}$\label{algline:WPP}
\tcp*{widest path computation}
		}
		\BlankLine
		\Return{$U_i(s_I)$}
	}
\end{algorithm}

\subsection{Player Reduction: from SGs to MDPs}
The following general definition is not directly used in Algorithm~\ref{algo:ourAlgorithm}. It is used in our theoretical development below, towards the algorithm's correctness.
\begin{mydefinition}[the MDP $\mdp(\game, \Av')$]\label{def:MDPbyActionRestriction}
 Let $\game$ be the game in Algorithm~\ref{algo:ourAlgorithm}, and $\Av'\colon S\to 2^{A}$ be  such that $\emptyset\neq\Av'(s)\subseteq \Av(s)$ for each $s\in S$. 

Then the MDP given by the tuple
$(S,S\setminus\{\zero\},\{\zero\}, s_{I},\mathbf{1},\mathbf{0},A,\Av',\delta)$ shall be denoted by
 $\mdp(\game, \Av')$, and we say it is induced from $\game$ by restricting $\Av$ to $\Av'$. 
\end{mydefinition}
 The above construction consists of 1) restricting actions (from $\Av$ to $\Av'$), and 2) turning Minimizer's states into Maximizer's.

The following class of action restrictions will be heavily used.
\begin{mydefinition}[Minimizer restriction]\label{def:minimizerRestriction}
 Let $\game$ be
as in Algorithm~\ref{algo:ourAlgorithm}. A \emph{Minimizer restriction} of $\Av$
is a function $\Av'\colon S\to 2^{A}$ such that 
1) $\emptyset\neq \Av'(s)\subseteq \Av(s)$  for each $s\in S$, and
2) $\Av' (s) = \Av(s)$ for each state $s\in S_{\maximizer}$ of Maximizer's.
\end{mydefinition}

In Algorithm~\ref{algo:ourAlgorithm}, we will be using the MDP induced by the following specific Minimizer restriction induced by a function $f$. 


\begin{mydefinition}[the MDP $\playerReduction (\game, f)$]\label{def:playerReduction}
 Let $\game$ be the game in Algorithm~\ref{algo:ourAlgorithm}, and $f\colon S\to [0,1]$ be a function. The MDP $\playerReduction (\game, f)$
is defined to be $\mdp(\game, \Av_{f})$ (Def.~\ref{def:MDPbyActionRestriction}),
where the function $\Av_{f}\colon S\to 2^{A}$ is defined as follows.
\begin{align}
\Av_{f} (s) &= \Av(s)
&&\text{for $s\in S_{\maximizer}$,}
\nonumber
 \\
 \Av_{f} (s) &= \{a \in \Av(s) \mid \forall b \in \Av(s).\, (\mynext_{a} f)(s) \leq (\mynext_{b} f)(s)\}
&&\text{for $s\in S_{\minimizer}$.}
\label{eq:Avf}
\end{align}
The function $\Av_{f}$ is a Minimizer restriction in $\game$  (Def.~\ref{def:minimizerRestriction}).
\end{mydefinition}
\noindent 
The intuition of~(\ref{eq:Avf}) is that $a=\argmin_{b\in \Av(s)}(\mynext_{b}f)(s)$. In the use of this construction in  Algorithm~\ref{algo:ourAlgorithm}, the function $f$ will be our ``best guess'' $L_{i}$ of the value function $V(\game)$. In this situation, $\argmin_{b\in \Av(s)}(\mynext_{b}f)(s)$ is the best action for Minimizer based on the guess $f=L_{i}$. 

\begin{mydefinition}[the MDP $\mdp_{i}$, and $\Av_{i}$]\label{def:MDPI}
 In Algorithm~\ref{algo:ourAlgorithm}, the MDP $\mdp_{i}$ is given by $\playerReduction(\game, L_{i})=\mdp(\game,\Av_{L_{i}})$. We write $\Av_{i}$  for available actions in $\mdp_{i}$, that is, 
 $\mdp_{i}=(S, \one,\zero, A,\Av_{i}, \delta)$.
\end{mydefinition}

In the case of Algorithm~\ref{algo:ourAlgorithm}, the MDPs $\mdp_{0},\mdp_{1},\dotsc$ do not only ``converge'' to $\game$, but also ``reach $\game$ in finitely many steps,'' in the following sense. 
The proof relies crucially on the fact that the set $\Av(s)$ of available actions is 
finite---there is uniform $\varepsilon >0$ such that every suboptimal action is suboptimal by a gap at least $\varepsilon$. 
\begin{mylemma}\label{lem:MIReachesG}
 In Algorithm~\ref{algo:ourAlgorithm}, there exists $i_{\mathrm{M}}\in \nat$ such that, for each $i\ge i_{\mathrm{M}}$, we have $V(\game)=V(\mdp_{i})$. 
\end{mylemma}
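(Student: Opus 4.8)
The plan is to sandwich $V(\mdp_{i})$ between $V=V(\game)$ from both sides, once the empirically optimal Minimizer actions have stabilized inside the genuinely optimal ones. First I would set up the hinted uniform gap. Call a Minimizer action $a\in\Av(s)$ \emph{$V$-suboptimal} at $s\in S_{\minimizer}$ if $(\mynext_{a}V)(s)>V(s)$, and \emph{$V$-optimal} otherwise (recall $V(s)=\min_{a}(\mynext_{a}V)(s)$ at Minimizer states). Since $S$ and each $\Av(s)$ are finite, the set of $V$-suboptimal pairs is finite, so $\varepsilon_{0}:=\min\{(\mynext_{a}V)(s)-V(s)\mid s\in S_{\minimizer},\ a\ \text{$V$-suboptimal at }s\}>0$ is well defined (take $\varepsilon_{0}=+\infty$ if there are none). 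By Lem.~\ref{lem:bellmanFixedPt} and Thm.~\ref{thm:valueFuncAsFixedPt} we have $L_{i}=\mynext^{i}\bot\uparrow V$ pointwise, and since each map $f\mapsto(\mynext_{a}f)(s)=\sum_{s'}\delta(s,a,s')f(s')$ is a finite convex combination, $(\mynext_{a}L_{i})(s)\uparrow(\mynext_{a}V)(s)$ monotonically. As there are finitely many pairs $(s,a)$, I can choose $i_{\mathrm{M}}$ so that $(\mynext_{a}V)(s)-(\mynext_{a}L_{i})(s)<\varepsilon_{0}/2$ for all $s,a$ and all $i\ge i_{\mathrm{M}}$.

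For such $i$ I would show $\Av_{L_{i}}(s)\subseteq\{a\mid(\mynext_{a}V)(s)=V(s)\}$ at every $s\in S_{\minimizer}$. Indeed, fix a $V$-optimal action $a^{*}$ and suppose some $a\in\Av_{L_{i}}(s)$ were $V$-suboptimal. Then $(\mynext_{a}L_{i})(s)>(\mynext_{a}V)(s)-\varepsilon_{0}/2\ge V(s)+\varepsilon_{0}/2$, while $(\mynext_{a^{*}}L_{i})(s)\le(\mynext_{a^{*}}V)(s)=V(s)<V(s)+\varepsilon_{0}/2$, so $(\mynext_{a}L_{i})(s)>(\mynext_{a^{*}}L_{i})(s)$, contradicting that $a$ minimizes $(\mynext_{b}L_{i})(s)$ (Def.~\ref{def:playerReduction}). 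Hence every empirically optimal action available in $\mdp_{i}$ at a former Minimizer state is genuinely $V$-optimal; note $\Av_{L_{i}}(s)\ne\emptyset$ always, since it is an $\argmin$ over a nonempty finite set.

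Next I would prove the two inequalities using the Bellman operator $\mynext^{\mdp_{i}}$ of $\mdp_{i}$ (all states but $\zero$ being Maximizer's, so $(\mynext^{\mdp_{i}}f)(s)=\max_{a\in\Av_{i}(s)}(\mynext_{a}f)(s)$ for $s\neq\zero$). For $V(\mdp_{i})\le V$: I claim $V$ is a fixed point of $\mynext^{\mdp_{i}}$. At original Maximizer states this is immediate since $\Av_{i}(s)=\Av(s)$; at former Minimizer states every $a\in\Av_{i}(s)$ satisfies $(\mynext_{a}V)(s)=V(s)$ by the inclusion above, so the maximum over them is again $V(s)$. Since $V(\mdp_{i})=\mu\mynext^{\mdp_{i}}$ (Thm.~\ref{thm:valueFuncAsFixedPt} applied to the MDP) lies below every fixed point, $V(\mdp_{i})\le V$. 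For the reverse $V(\mdp_{i})\ge V$, I compare operators pointwise: for any $f$ and any state, $(\mynext^{\mdp_{i}}f)(s)\ge(\mynext f)(s)$, because at Maximizer states the two agree, while at Minimizer states $\max_{a\in\Av_{i}(s)}(\mynext_{a}f)(s)\ge\min_{a\in\Av_{i}(s)}(\mynext_{a}f)(s)\ge\min_{a\in\Av(s)}(\mynext_{a}f)(s)$ using $\Av_{i}(s)\subseteq\Av(s)$. A routine induction ($(\mynext^{\mdp_{i}})^{n}\bot\ge\mynext^{n}\bot$ via monotonicity) then gives $\mu\mynext^{\mdp_{i}}\ge\mu\mynext$, i.e.\ $V(\mdp_{i})\ge V$. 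Combining the two inequalities yields $V(\mdp_{i})=V(\game)$ for all $i\ge i_{\mathrm{M}}$.

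The main obstacle is the first step: turning the analytic convergence $L_{i}\to V$ into the \emph{combinatorial} stabilization of the action sets $\Av_{L_{i}}(s)$. This is exactly where finiteness of $\Av(s)$ is indispensable, since without it the suboptimality gap $\varepsilon_{0}$ could be $0$ and no uniform $i_{\mathrm{M}}$ would exist. I expect the only delicate point there to be ties among $V$-optimal actions, but this causes no trouble because I only need the inclusion $\Av_{L_{i}}(s)\subseteq\{V\text{-optimal actions}\}$ rather than equality. The two fixed-point comparisons are then routine consequences of the monotonicity of $\mynext$ and Thm.~\ref{thm:valueFuncAsFixedPt}.
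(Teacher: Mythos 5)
Your proof is correct and follows essentially the same route as the paper's: a uniform suboptimality gap $\varepsilon_{0}$ (finiteness of $S$ and $\Av(s)$), convergence of $L_{i}$ to $V(\game)$, and the three-inequality comparison showing every $V$-suboptimal Minimizer action is eventually excluded from $\Av_{L_{i}}(s)$. The only difference is that you explicitly carry out the final sandwich $V(\mdp_{i})\le V\le V(\mdp_{i})$ via the two fixed-point comparisons, a step the paper leaves implicit after establishing $a\notin\Av_{i}(s)$; this is a welcome addition rather than a deviation.
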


\begin{proof}
 The proof outline is as follows. By Lem.~\ref{lem:bellmanFixedPt}.\ref{item:bellmanLeast} and Thm.~\ref{thm:valueFuncAsFixedPt}, $L_{i}$ converges to $V(\game)$. It is important that the set $\Av(s)$ of available actions is discrete in~(\ref{eq:Avf}): in case $\Av_{i}$ (Def.~\ref{def:MDPI}) consists of suboptimal actions of Minimizer's, it has to be because $L_{i}$ is at least $\varepsilon$ apart from $V(\game)$ for some fixed $\varepsilon$. This does not happen when $i$ is large enough. 

More precisely, let $\varepsilon >0$ be a positive number such that, for each $s\in S_{\minimizer}$ and each $a\in \Av(s)$, 
\begin{equation}\label{eq:convSuboptim}
  \begin{aligned}
&  (\mynext_a (V(\game)))(s) > \min_{b \in \Av(s)}(\mynext_b (V(\game)))(s)
\\&\Longrightarrow\quad
	(\mynext_a (V(\game)))(s) \geq \min_{b \in \Av(s)}(\mynext_b (V(\game)))(s) +\varepsilon.
 \end{aligned}
\end{equation}
Such $\varepsilon$ exists since there are only finitely many $s$ and $a$. 

Now let $i_{\mathrm{M}}$ be such that 
\begin{displaymath}
 V(\game)(s)-L_{i_{\mathrm{M}}}(s) < \varepsilon\quad\text{for each $s\in S$;}
\end{displaymath}
such $i_{\mathrm{M}}$ exists since $L_{i}$ converges to $V(\game)$. Since $L_{i}$ is increasing, we have 
\begin{math}
  V(\game)(s)-L_{i}(s) \le \varepsilon
\end{math}
for each $s$ and $i\ge i_{\mathrm{M}}$.

Let $s\in S_{\minimizer}$ be an arbitrary Minimizer state,  $a\in \Av(s)$ be any suboptimal action (with respect to $V(\game)$), and $a^{*}$ be an optimal one. It suffices to show that $a\not\in\Av_{i}(s)$ for each $i\ge i_{\mathrm{M}}$.

By~(\ref{eq:convSuboptim}) we have 
\begin{math}
 	(\mynext_a (V(\game)))(s) \geq (\mynext_{a^{*}} (V(\game)))(s) +\varepsilon
\end{math}. Moreover, 
	\begin{align*}
	(\mynext_a (V(\game)))(s) 
	&=    \sum_{s' \in S} \delta(s,a,s')\cdot V(\game)(s') \\
	&< \biggl( \sum_{s' \in S} \delta(s,a,s')\cdot L_i(s') \biggr) + \varepsilon = (\mynext_a L_i)(s) + \varepsilon.
	\end{align*}
The last two observations are used in the following.
\begin{align*}
 	(\mynext_a L_i)(s) 
	\;>\; (\mynext_a (V(\game)))(s) -\varepsilon
	\;\geq\; 
        (\mynext_{a^{*}}( V(\game)))(s) 
	\;\geq\; (\mynext_{a^{*}} L_i)(s),
\end{align*}
where the last inequality is because $L_{i}$ is a lower bound of $V(\game)$. 
Therefore $a\not\in\Av_{i}(s)$ by~(\ref{eq:Avf}). 
\qed
\end{proof}

\subsection{Local Propagation: from MDPs to WGs}
Here is a  technical observation that motivates the function $\localPropagation $.

\begin{mylemma}\label{lem:keyLemmaForBVIWPP}
  Let $\game$ be the game in Algorithm~\ref{algo:ourAlgorithm}, and $\Av'\colon S\to 2^{A}$ be a Minimizer restriction (Def.~\ref{def:minimizerRestriction}).
\begin{enumerate}
 \item For each state
  $s\in S$, we have 
 \begin{math}
 V(\game)(s)
 \le
 \textstyle\max_{a\in \Av'(s)}\bigl(\,\mynext_{a}\bigl(V(\game)\bigr)\,\bigr)(s).
 \end{math}
 \item 
For each $k\in \nat$, we have
 \begin{equation}\label{eq:almostWPP}
  V(\game)(s_{0})
  \;\le\;
  \max_{s_{0}\xrightarrow{a_{0}}s_{1}\xrightarrow{a_{1}}\,\cdots\,\xrightarrow{a_{k}}
 \text{ in $\Av'$}
} 
 \bigl(\,\mynext_{a_{k}}\bigl(V(\game)\bigr)\,\bigr)(s_{k}),
 \end{equation}
 where the maximum is taken over $a_{0}, s_{1},a_{1},\dotsc, s_{k}, a_{k}$ such that
\begin{math}
 a_{0}\in \Av'(s_{0}), s_{1}\in \post(s_{0},a_{0}), a_{1}\in \Av'(s_{1}), \dotsc,
  s_{k}\in \post(s_{k-1},a_{k-1}),
  a_{k}  \in \Av'(s_{k})
\end{math}.
\end{enumerate}
\end{mylemma}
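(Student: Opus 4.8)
The plan is to prove the first claim directly from the fixed-point characterization of $V(\game)$, and then bootstrap it to the second claim by induction on $k$.

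For the first claim, I would use that $V(\game)$ is a fixed point of $\mynext$ (Thm.~\ref{thm:valueFuncAsFixedPt}), so $V(\game)(s)=(\mynext V(\game))(s)$, and split on the owner of $s$. If $s\in S_{\maximizer}$, then because a Minimizer restriction leaves Maximizer's actions untouched ($\Av'(s)=\Av(s)$), the claimed maximum over $\Av'(s)$ is literally $(\mynext V(\game))(s)=V(\game)(s)$, so equality holds. If $s\in S_{\minimizer}$, then $V(\game)(s)=\min_{b\in\Av(s)}(\mynext_b V(\game))(s)$; since $\Av'(s)$ is a nonempty subset of $\Av(s)$, every $a\in\Av'(s)$ satisfies $(\mynext_a V(\game))(s)\ge V(\game)(s)$, and taking the maximum over the nonempty set $\Av'(s)$ preserves this lower bound. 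Thus the first claim holds, the only real content being that a Minimizer restriction never removes a Maximizer option and never empties a Minimizer state.

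For the second claim, write $R_k(s_0)$ for the right-hand side of~(\ref{eq:almostWPP}) and argue $V(\game)(s_0)\le R_k(s_0)$ by induction on $k$. The base case $k=0$ is exactly the first claim. For the inductive step I would decompose a length-$(k{+}1)$ path through its first transition: peeling off $a_0\in\Av'(s_0)$ and $s_1\in\post(s_0,a_0)$ gives $R_{k+1}(s_0)=\max_{a_0\in\Av'(s_0),\,s_1\in\post(s_0,a_0)} R_k(s_1)$, since the maximized quantity $(\mynext_{a_k}V(\game))(s_k)$ depends only on the tail of the path starting at $s_1$. By the induction hypothesis $R_k(s_1)\ge V(\game)(s_1)$, so $R_{k+1}(s_0)\ge \max_{a_0\in\Av'(s_0),\,s_1\in\post(s_0,a_0)} V(\game)(s_1)$. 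It then remains to bound $V(\game)(s_0)$ by this last quantity, and here I invoke the first claim together with the key over-approximation step: for each $a_0$, the value $(\mynext_{a_0}V(\game))(s_0)=\sum_{s_1}\delta(s_0,a_0,s_1)V(\game)(s_1)$ is a convex combination and hence at most $\max_{s_1\in\post(s_0,a_0)}V(\game)(s_1)$. Combining, $V(\game)(s_0)\le \max_{a_0\in\Av'(s_0)}(\mynext_{a_0}V(\game))(s_0)\le \max_{a_0\in\Av'(s_0),\,s_1\in\post(s_0,a_0)}V(\game)(s_1)\le R_{k+1}(s_0)$, which closes the induction.

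The step that carries the essential idea, and the one to state carefully, is replacing the expectation $\mynext_{a_0}$ by the maximum over successors: this is precisely where the inequality becomes a genuine over-approximation, and where the information loss that later permits the passage to a weighted graph originates. The remaining subtlety is mere bookkeeping: the boundary states $\one$ and $\zero$ should be checked separately, since there $(\mynext_a V(\game))$ equals the constants $1$ and $0$ rather than an average, but both cases are immediate because $V(\game)(\zero)=0$ and $V(\game)(\one)=1=R_k(\one)$.
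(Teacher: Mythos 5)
Your proof is correct and follows essentially the same route as the paper's: item~1 via the fixed-point equation for $V(\game)$ with a case split on the owner of $s$, and item~2 by induction on $k$ using item~1 together with the replacement of the convex combination $\sum_{s_1}\delta(s_0,a_0,s_1)(\cdot)$ by a maximum over $s_1\in\post(s_0,a_0)$. The only (immaterial) difference is the order of the two steps in the inductive case --- you bound the expectation by the max over successors before invoking the induction hypothesis, while the paper applies the induction hypothesis inside the expectation first; both orderings are valid.
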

\begin{proof}
 For the item 1, recall that $V(\game)$ is the least fixed point of the Bellman operator (Thm.~\ref{thm:valueFuncAsFixedPt}). For each Minimizer state $s\in S_{\minimizer}$, we have
\begin{align*}
   V(\game)(s) = \min_{a\in \Av(s)}\bigl(\mynext_{a}\bigl(V(\game)\bigr)\bigr)(s)
  \le \min_{a\in \Av'(s)}\bigl(\mynext_{a}\bigl(V(\game)\bigr)\bigr)(s)
  \le \max_{a\in \Av'(s)}\bigl(\mynext_{a}\bigl(V(\game)\bigr)\bigr)(s).
\end{align*}
 For each Maximizer state $s\in S_{\maximizer}$, we have
\begin{align*}
   V(\game)(s) = \max_{a\in \Av(s)}\bigl(\mynext_{a}\bigl(V(\game)\bigr)\bigr)(s)
= \max_{a\in \Av'(s)}\bigl(\mynext_{a}\bigl(V(\game)\bigr)\bigr)(s).
\end{align*}
The latter equality is because $\Av'$ does not restrict Maximizer's actions. This proves the item 1.

The item 2 is proved by induction as follows, using the item 1 in its course.
\begin{align*}
  & V(\game)(s_{0})
\\
&\le 
 \max_{a_{0}\in \Av'(s_{0})}\bigl(\,\mynext_{a_{0}}(V(\game))\,\bigr)(s_{0})
  \qquad\text{by the item 1.}
\numberthis\label{eq:iIsZeroFanLemma}
\\
&=
\max_{a_{0}\in \Av'(s_{0})}
 \sum_{s_{1}\in \post(s_{0},a_{0})}\delta(s_{0},a_{0},s_{1})\cdot V(\game)(s_{1})
\\
&\le
\max_{a_{0}\in \Av'(s_{0})}
 \sum_{s_{1}\in \post(s_{0},a_{0})}\delta(s_{0},a_{0},s_{1})\cdot 
\Bigl(\,  \max_{s_{1}\stackrel{a_{1}}{\to}\,\cdots\,\stackrel{a_{k}}{\to}
 \text{\,\, in $\Av'$}} 
 \bigl(\,\mynext_{a_{k}}\bigl(V(\game)\bigr)\,\bigr)(s_{k})
\,\Bigr)
\\
&\qquad\qquad  \qquad\text{by the induction hypothesis (for $k-1$)}
\\
&\le
\max_{a_{0}\in \Av'(s_{0})}
\max_{s_{1}\in \post(s_{0},a_{0})}
\Bigl(\,  \max_{s_{1}\stackrel{a_{1}}{\to}\,\cdots\,\stackrel{a_{k}}{\to}
 \text{\,\, in $\Av'$}} 
 \bigl(\,\mynext_{a_{k}}\bigl(V(\game)\bigr)\,\bigr)(s_{k})
\,\Bigr)
\numberthis\label{eq:fromAvToMax}
\\
&=
  \max_{s_{0}\xrightarrow{a_{0}}s_{1}\xrightarrow{a_{1}}\,\cdots\,\xrightarrow{a_{k}}
 \text{ in $\Av'$}
} 
 \bigl(\,\mynext_{a_{k}}\bigl(V(\game)\bigr)\,\bigr)(s_{k}).
\end{align*}
The inequality in~(\ref{eq:fromAvToMax}) holds since an average over $s_{1}$ on the left-hand side is replaced by the corresponding maximum on the right-hand side. Note that the value 
\begin{math}
  \max_{s_{1}\stackrel{a_{1}}{\to}\,\cdots\,\stackrel{a_{k}}{\to}
 \text{\,\, in $\Av'$}} 
 \min_{i\in[1,k]}
 \bigl(\,\mynext_{a_{i}}\bigl(V(\game)\bigr)\,\bigr)(s_{i})
\end{math} that
occurs on both sides is determined once $s_{1}$ is determined. 
This concludes the proof. \qed
\end{proof}
Lem.~\ref{lem:keyLemmaForBVIWPP}.2, although  not itself used in the following technical development, suggests the idea of global propagation for upper bounds.
Note that a bound is given in~(\ref{eq:almostWPP}) for each $k$; it is possible that a bound for some $k>1$ is tighter than that for $k=1$, motivating us to take a ``look-ahead'' further than one step. 

However, the bound in~(\ref{eq:almostWPP}) is not particularly tuned for tractability: computation of the maximum involves words whose number is exponential in $k$, and moreover, we want to do so for many $k$'s. 

In the end, our main technical contribution is that a similar ``look-ahead'' can be done by solving the widest path problem in the following weighted graph. The soundness of this method is not so easy as for Lem.~\ref{lem:keyLemmaForBVIWPP}.2---see~\S{}\ref{subsec:soundnessAndConvergence}. 


\begin{mydefinition}[the WG $\localPropagation (\mdp,f)$]\label{def:localPropag}
Let 
$\mdp=(S, \one,\zero, A,\Av', \delta)$ be an MDP, 
and $f\colon S\to [0,1] $.
The WG $\localPropagation (\mdp,f)$ is  the following triple $(S,E,w)$.
\begin{itemize}
 \item Its set of vertices is $S$. 
 \item 
We have $(s,s')\in E$
if and only if, for some $a\in \Av'(s)$,  we have $s'\in \post(s,a)$ (i.e., $\delta(s,a,s')>0$). 
 \item The weight function $w\colon E\to [0,1]$ is given by 
\begin{equation}\label{eq:defOfWeightInLocalPropagation}
 		w(s,s') = \max \bigl\{\,\mynext_a f(s) \;\big|\; a \in \Av'(s), 
                   s'\in\post(s,a)\,\bigr\}.
\end{equation}
\end{itemize}
\end{mydefinition}
\noindent In~(\ref{eq:defOfWeightInLocalPropagation}), the function $f$---that is,  the previous upper bound $U_{i-1}$ in Algorithm~\ref{algo:ourAlgorithm}---is propagated one step by the application of $\mynext_{a}$. This way of encoding these propagated values as weights in a WG seems pretty rough. For example, in case both $s'$ and $s''$ are in $\post(s,a)$ for each $a\in \Av'(s)$, we have $w(s,s')=w(s,s'')$, no matter what the transition probabilities from $s$ to $s',s''$ are. 
The return for this paid price (namely the information lost in the rough encoding) is that the resulting data structure (WG) allows fast \emph{global} analysis via the widest path problem. Our experiment results in~\S{}\ref{sec:experiment} demonstrate that this rough yet global approximation can make upper bounds quickly converge.

\subsection{Soundness and Convergence}\label{subsec:soundnessAndConvergence}
In Algorithm~\ref{algo:ourAlgorithm}, an SG $\game$ is turned into an MDP $\mdp_{i}$ and then to a WG $\wg_{i}$. Our claim is that computing a widest path in $\wg_{i}$ gives the next upper bound $U_{i}$ in the iteration. Here we prove the following correctness properties: soundness ($V(\game)\le U_{i}$) and convergence ($U_{i}\to V(\game)$ as $i\to\infty$).

We start with a technical lemma. The choice of the MDP $\mdp(\game,\Av')$ and the value function $V(\game)$ (for $\game$, not for  $\mdp(\game,\Av')$) in the statement is subtle; it turns out to be just what we need. 
 \begin{mylemma}\label{lem:increasingPath}
   Let $\game$ be as in Algorithm~\ref{algo:ourAlgorithm}, and $\Av'\colon S\to 2^{A}$ be a Minimizer restriction (Def.~\ref{def:minimizerRestriction}). Let $s_{0}\in  S_{\Diamond\one}$ be a state with a non-zero value (Def.~\ref{def:valueFunc}).  Consider
the MDP $\mdp(\game,\Av')$ (Def.~\ref{def:MDPbyActionRestriction}), for which we write simply $\mdp$.
Then
 there is a finite path $\pi=s_0 a_0 s_1 a_1 \dotsc a_{n-1} s_{n}$ in $\mdp$ that satisfies the following.
 \begin{itemize}
 \item The path $\pi$ reaches $\one$, that is, $s_{n}=\one$. 
 \item Each action is optimal in  $\mdp$ with respect to $V(\game)$, that is, 
  $
  \bigl(\mynext_{a_{i}}\bigl(V(\game)\bigr)\bigr)(s_{i})
  =
  max_{a\in \Av'(s_{i})}   \bigl(\mynext_{a}\bigl(V(\game)\bigr)\bigr)(s_{i})
  $ for each $i\in [0,n-1]$.
 \item The value function $V(\game)$ does not decrease along the path, that is, 
       \begin{math}
 	V(\game)(s_{i})\le 	V(\game)(s_{i+1})
       \end{math}
       for each $i\in [0,n-1]$.
 \end{itemize}
 \end{mylemma}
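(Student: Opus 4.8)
The plan is to prove the equivalent statement that \emph{every} state in $S_{\Diamond\one}$ admits such a path, and then specialize to $s_0$. Call a state $s \in S_{\Diamond\one}$ \emph{good} if there is a finite path in $\mdp$ from $s$ to $\one$ all of whose actions are optimal (with respect to $V(\game)$ among $\Av'$) and along which $V(\game)$ is non-decreasing; write $U$ for the set of good states, and note $\one \in U$ via the length-$0$ path. The key one-step device is Lem.~\ref{lem:keyLemmaForBVIWPP}.1: for every $s$, if $a^{*}(s) \in \argmax_{a \in \Av'(s)}(\mynext_a V(\game))(s)$, then $V(\game)(s) \le (\mynext_{a^{*}(s)} V(\game))(s) = \sum_{s'} \delta(s, a^{*}(s), s') V(\game)(s')$. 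Since the right-hand side is a convex combination of successor values that is at least $V(\game)(s)$, at least one successor $s'$ of $a^{*}(s)$ satisfies $V(\game)(s') \ge V(\game)(s)$; this is the engine for extending paths without decreasing the value.

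First I would argue by contradiction, assuming $W := S_{\Diamond\one} \setminus U \neq \emptyset$. Let $m = \max_{s \in W} V(\game)(s)$ (a maximum over a finite set, with $m > 0$), and let $C = \{ s \in W \mid V(\game)(s) = m \}$; note $\one, \zero \notin C$. The heart of the argument is to show that $C$ is a \emph{trap}: for every $s \in C$ the optimal action $a^{*}(s)$ satisfies $\post(s, a^{*}(s)) \subseteq C$, and moreover at every Maximizer state of $C$ \emph{all} optimal actions stay inside $C$. This uses the averaging device above together with two escape observations: no optimal action from $s \in C$ can have a successor $s'$ with $V(\game)(s') > m$ (such $s'$ lies in $U$ by maximality of $m$, yielding a good path from $s$ and contradicting $s \in W$); and if an optimal action has a successor $s' \notin C$ with $V(\game)(s') = m$, then $s' \in U$ again, giving the same contradiction. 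Ruling these out forces all successors of every optimal action at $s \in C$ to lie in $C$ (for Minimizer states I only need this for $a^{*}(s)$, which suffices since Minimizer takes a $\min$).

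Next I would convert this trap into a contradiction with the \emph{least} fixed point characterization $V(\game) = \mu\mynext$ (Thm.~\ref{thm:valueFuncAsFixedPt}). Fix $\eta$ with $0 < \eta \le m$ and $\eta$ below the minimal suboptimality gap $(\mynext_{a^{*}(s)}V(\game))(s) - (\mynext_a V(\game))(s)$ taken over Maximizer states $s \in C$ and suboptimal $a \in \Av(s)$ (positive, as this is a minimum over a finite set). Define $f = V(\game)$ off $C$ and $f = m - \eta$ on $C$, so $f \le V(\game)$ with strict inequality on $C \neq \emptyset$. I would then verify $\mynext f \le f$ pointwise: off $C$ this follows from monotonicity of $\mynext$ together with $f \le V(\game)$; at a Maximizer state $s \in C$, optimal actions keep all probability mass in $C$ and thus evaluate to exactly $m - \eta$, while suboptimal actions are pushed below $m - \eta$ by the gap; at a Minimizer state $s \in C$, the action $a^{*}(s)$ already evaluates to $m - \eta$, so the $\min$ is at most $m - \eta$. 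Hence $f$ is a pre-fixed point with $f \not\ge V(\game)$, contradicting that $\mu\mynext$ is the least pre-fixed point. Therefore $W = \emptyset$, so $s_0 \in U$, and unfolding the definition of $U$ (using that prefixing an optimal non-decreasing edge to a good path keeps it good) yields the desired path $\pi$.

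The main obstacle is exactly the equal-value level set $C$: a naive ``take the maximal-value bad state and walk upward'' argument breaks when optimal non-decreasing moves only cycle among states of the same value. The least-fixed-point contradiction above is what resolves it --- it is precisely the place where we use that $V(\game)$ is the \emph{least} (not merely some) fixed point of $\mynext$, ruling out value-preserving cycles that never reach $\one$. The remaining steps --- the averaging argument, the finite minimal-gap choice of $\eta$, and the path concatenation --- are routine.
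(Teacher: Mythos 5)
Your proof is correct, and its combinatorial core coincides with the paper's: both argue by contradiction that, among the states still lacking a suitable path, those of maximal value $m$ would form a trap closed under optimal actions---using exactly the averaging step you describe (no successor may exceed $m$, the $\delta$-average of successor values is at least $m$, hence every successor sits at value exactly $m$ and must itself be trapped). Where you diverge is in the packaging and, more importantly, in how the trap is turned into a contradiction. The paper runs a greedy backward construction (Alg.~\ref{algo:increasingPath}) that prepends states in decreasing order of $V(\game)$ and must show the greedy choice never gets stuck; it then closes the contradiction operationally: a play under an optimal strategy never leaves the trap $S_{\max}$, so it never reaches $\one$, and since an optimal strategy of $\mdp$ with respect to $V(\game)$ performs at least as well as Maximizer's optimal strategy in $\game$ (via Lem.~\ref{lem:keyLemmaForBVIWPP}), the trapped states would have value $0$. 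You close it order-theoretically instead: the trap lets you write down an explicit pre-fixed point $f$ with $\mynext f\le f$ and $f<V(\game)$ on $C$, contradicting that $V(\game)=\mu\mynext$ is the \emph{least} pre-fixed point. Your route buys a more self-contained and sharper final step---it replaces the paper's somewhat informal appeal to reachability probabilities of optimal strategies by a one-line Knaster--Tarski argument---at the modest cost of the $\eta$/suboptimality-gap bookkeeping needed to verify $\mynext f\le f$ at Maximizer states of $C$ (where you correctly note that \emph{all} optimal actions must stay in $C$, whereas at Minimizer states a single trapped action in $\Av'(s)$ suffices for the $\min$). Both arguments are sound; the paper's has the incidental benefit of yielding the paths for all states of $S_{\Diamond\one}$ simultaneously, which it records as the function $\mathsf{PATH}$, though nothing downstream appears to require that extra output.
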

 \begin{proof}
 We construct a function $\mathsf{PATH}:  S_{\Diamond\one} \to S^+$ 
 by Algorithm~\ref{algo:increasingPath}. It is clear that $\mathsf{PATH}$ assigns a desired path to each $s_{0}\in S_{\Diamond\one}$. In particular, $V(\game)$ does not decrease along $\mathsf{PATH}(s_{0})$ since always a state with a smaller value of $V(\game)$ is prepended.

 \begin{algorithm}[tbp] 
 	\caption{A construction of $\mathsf{PATH}:  S_{\Diamond\one} \to S^+$  for Lem.~\ref{lem:increasingPath}}\label{algo:increasingPath}
 	\DontPrintSemicolon
 	\SetKwProg{Pn}{procedure}{}{}
                $S_{\mathrm{v}} \leftarrow \{\one\},\; \mathsf{PATH}(\one)\leftarrow \one$
 		 \BlankLine
 		\While{$S_{\Diamond\one} \setminus S_{\mathrm{v}}  \neq \emptyset$\label{algline:increasingPathWhile}}{
                   Choose a pair of states $(s_{\mathrm{c}},s_{\mathrm{p}})$ that satisfies the following: \qquad\qquad $s_{\mathrm{c}} \in S\setminus  S_{\mathrm{v}}$, $s_{\mathrm{p}} \in S_{\mathrm{v}}$, $V(\game)(s_{\mathrm{c}}) = \max_{s \in S\setminus S_{\mathrm{v}}} V(\game)(s)$, and 
 \qquad\qquad\qquad\quad
 for an optimal action $a$ at $s_{\mathrm{c}}$ in $\mdp$, 
 $
 s_{\mathrm{p}}\in \post(s_{\mathrm{c}},a)$
 \label{algline:chooseMaxValueState}
 \\
 		 $\mathsf{PATH}(s_{\mathrm{c}}) \leftarrow s_{\mathrm{c}} \cdot \mathsf{PATH}(s_{\mathrm{p}}),\; S_{\mathrm{v}} \leftarrow S_{\mathrm{v}} \cup \{s_{\mathrm{c}}\}$
 		}
 		\BlankLine
 		\Return{$\mathsf{PATH}$}
 \end{algorithm}

 It remains to be shown that, in Line~\ref{algline:chooseMaxValueState}, a required pair $(s_{\mathrm{c}}, s_{\mathrm{p}})$ is always found. 
 Let $S_{\mathrm{v}} \subsetneq S_{\Diamond\one}$ be a subset with $\one \in S_{\mathrm{v}}$; here $S_{\mathrm{v}}$ is a proper subset of $  S_{\Diamond\one}$ since otherwise we should be already out of the while loop (Line~\ref{algline:increasingPathWhile}). 

 Let  
$
 S_{\max} = \{s \in S \setminus  S_{\mathrm{v}} \mid V(\game)(s) = \max_{s' \in S \setminus  S_{\mathrm{v}}} V(\game)(s')\}. 
$
 Since $S_{\mathrm{v}} \subsetneq  S_{\Diamond\one}$, we have $\emptyset\neq S_{\max}\subseteq S_{\Diamond\one}$ and thus  $V(\game)(s)> 0$ for each $s\in S_{\max}$. We also have $\one\not\in S_{\max}$ since $\one\in S_{\mathrm{v}}$.

 We argue by contradiction: assume that for any $s \in S\setminus  S_{\mathrm{v}}$,  $s'\in S_{\mathrm{v}}$, 
we have $s'\not\in\post(s,a_{s})$, where $a_{s}$  is any optimal action at $s$ in $\mdp$ with respect to $V(\game)$.

 Now let $s\in S_{\max}$ be an arbitrary element. It follows that $V(\game)(s) > 0$. 
 \begin{align*}
 V(\game)(s) 
 &\leq
 \bigl(\mynext_{a_{s}}\bigl(V(\game)\bigr)\bigr)(s)
 \\
 &\!\!\!\!
\text{using Lem.~\ref{lem:keyLemmaForBVIWPP}; here $a_{s}$ is an optimal action at $s$ in $\mdp$ with respect to $V(\game)$, 
 }
 \\
 &=
 \textstyle\sum_{s' \in S\setminus  S_{\mathrm{v}}} \delta(s, a_{s},s') \cdot V(\game)(s')
 \\
 &\qquad
 \qquad\text{by the assumption that $s'\not\in\post(s,a_{s})$ for each $s'\in S_{\mathrm{v}}$} 
 \\
 &\le
 \textstyle\sum_{s' \in S\setminus  S_{\mathrm{v}}} \delta(s, a_{s},s') \cdot V(\game)(s)
 \numberthis\label{eq:ineqThatMustBeEq}
 \\
 &\qquad
 \qquad\text{since $s\in S_{\max}$ and hence $V(\game)(s')\le V(\game)(s)$} 
 \\
 &=
 V(\game)(s)
 \qquad\text{since 
 $\textstyle\sum_{s' \in S\setminus  S_{\mathrm{v}}} \delta(s_{\mathrm{c}}, a,s') =1$.
 }
 \end{align*}
 Therefore both inequalities in the above must be equalities. In particular, for the second inequality (in~(\ref{eq:ineqThatMustBeEq})) to be an equality, we must have the weight for each suboptimal $s'$ to be $0$. That is, $\delta(s,a_{s},s')=0$ for each $s'\in (S\setminus S_{\mathrm{v}})\setminus S_{\max}$.


The above holds for arbitrary $s\in S_{\max}$. Therefore, for any strategy that is optimal in $\mdp$ with respect to $V(\game)$, once a play is in $S_{\max}$, it never comes out of $S_{\max}$, hence the play never reaches $\one$.  Moreover, an optimal strategy in $\mdp$ with respect to $V(\game)$ is at least as good as an optimal strategy for Maximizer in $\game$ (with respect to $V(\game)$), that is, the latter reaches $\one$ no more often than the former. This follows from Lem.~\ref{lem:keyLemmaForBVIWPP}. Altogether, we conclude that a Maximizer optimal strategy in $\game$ does not lead any $s\in S_{\max}$ to $\one$, i.e.,  $V(\mdp)(s)=0$ for each $s\in S_{\max}$. 
Now we come to a contradiction.
\qed
 \end{proof}

In the following lemma, we use the value function $V(\game)$ in the position of $f$ in Def.~\ref{def:localPropag}. This cannot be done in actual execution of Algorithm~\ref{algo:increasingPath}: unlike $U_{i-1}$ 
in Algorithm~\ref{algo:ourAlgorithm}, the value function $V(\game)$ is not known to us.  Nevertheless, the lemma is an important theoretical vehicle towards  soundness of Algorithm~\ref{algo:ourAlgorithm}. 
\begin{mylemma}\label{lem:stepwiseSoundness}
 Let $\game$ be the game in Algorithm~\ref{algo:ourAlgorithm}, and $\Av'\colon S\to 2^{A}$ be a Minimizer restriction (Def.~\ref{def:minimizerRestriction}). 
Let
$\mdp=\mdp(\game, \Av')$, and 
   $\wg=\localPropagation \bigl(\mdp, V(\game)\bigr)$.
Then, for each state $s\in S$, we have
$\wpw(\wg)(s,\one)\ge V(\game)(s)$.
\end{mylemma}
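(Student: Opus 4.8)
The plan is to exhibit, for each state $s$ with $V(\game)(s)>0$, a single path in $\wg$ from $s$ to $\one$ whose width is at least $V(\game)(s)$. Since the widest path width $\wpw(\wg)(s,\one)$ is by definition the maximum width over all paths from $s$ to $\one$, producing one such path immediately yields $\wpw(\wg)(s,\one)\ge V(\game)(s)$. The remaining case $V(\game)(s)=0$ is trivial, as all edge weights and hence all widths lie in $[0,1]$ and are nonnegative.

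The path will be supplied by Lem.~\ref{lem:increasingPath}, applied to the very same Minimizer restriction $\Av'$. For $s=s_{0}\in S_{\Diamond\one}$ it gives a finite path $\pi=s_{0}a_{0}s_{1}a_{1}\dotsc a_{n-1}s_{n}$ in $\mdp=\mdp(\game,\Av')$ with $s_{n}=\one$, in which each action $a_{i}$ is optimal in $\mdp$ with respect to $V(\game)$ and $V(\game)$ is non-decreasing along $\pi$. Dropping the actions yields a genuine path $s_{0}s_{1}\dotsc s_{n}$ in $\wg$: each consecutive pair $(s_{i},s_{i+1})$ is an edge, because $a_{i}\in\Av'(s_{i})$ and $s_{i+1}\in\post(s_{i},a_{i})$, which is precisely the edge condition of Def.~\ref{def:localPropag}.

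The crux is a uniform lower bound on the edge weights along this path. First, since $a_{i}\in\Av'(s_{i})$ and $s_{i+1}\in\post(s_{i},a_{i})$, the weight formula~(\ref{eq:defOfWeightInLocalPropagation}) with $f=V(\game)$ gives $w(s_{i},s_{i+1})\ge(\mynext_{a_{i}}(V(\game)))(s_{i})$. Next, optimality of $a_{i}$ in $\mdp$ turns the right-hand side into $\max_{a\in\Av'(s_{i})}(\mynext_{a}(V(\game)))(s_{i})$, and Lem.~\ref{lem:keyLemmaForBVIWPP}.1 bounds this maximum below by $V(\game)(s_{i})$. Chaining these gives $w(s_{i},s_{i+1})\ge V(\game)(s_{i})$ for every $i$. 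I would then invoke the non-decreasing property $V(\game)(s_{0})\le V(\game)(s_{1})\le\dotsb\le V(\game)(s_{n})$ to conclude $w(s_{i},s_{i+1})\ge V(\game)(s_{i})\ge V(\game)(s_{0})$ for all $i$, whence the width $\min_{i}w(s_{i},s_{i+1})$ of $\pi$ is at least $V(\game)(s_{0})$, as desired.

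I do not anticipate a serious obstacle, since the heavy lifting already resides in Lem.~\ref{lem:increasingPath} and Lem.~\ref{lem:keyLemmaForBVIWPP}. The one subtlety that must be respected is the interaction between the two qualitative properties of $\pi$: the per-edge estimate $w(s_{i},s_{i+1})\ge V(\game)(s_{i})$ controls a weight only in terms of its \emph{source} value, so it is essential that $V(\game)$ never \emph{decreases} along $\pi$ in order to bound every edge uniformly from below by the starting value $V(\game)(s_{0})$ rather than by the possibly smaller intermediate values. This is exactly the reason Lem.~\ref{lem:increasingPath} was arranged to deliver a non-decreasing path, and recognizing that this monotonicity is what makes the widest-path bottleneck argument go through is the main conceptual point of the proof.
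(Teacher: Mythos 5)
Your proposal is correct and follows essentially the same route as the paper's own proof: apply Lem.~\ref{lem:increasingPath} to obtain a non-decreasing, action-optimal path in $\mdp(\game,\Av')$, drop the actions to get a path in $\wg$, bound each edge weight below by $\max_{a\in\Av'(s_{i})}(\mynext_{a}(V(\game)))(s_{i})\ge V(\game)(s_{i})$ via Lem.~\ref{lem:keyLemmaForBVIWPP}, and use monotonicity of $V(\game)$ along the path to conclude the width is at least $V(\game)(s_{0})$. The only cosmetic difference is that the paper computes the edge weight as an equality (the maximum is attained at the optimal $a_{i}$) where you use an inequality, which is equally valid.
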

\begin{proof}
In what follows, we let the WG $\wg=\localPropagation
 \bigl(\mdp, V(\game)\bigr)
$ be denoted by $\wg=(S, E, w)$. 
%
Let $\pi=s_0 a_0 s_1 a_1 \dotsc a_{n-1} s_{n}$ be a path of the MDP $\mdp$ such that $s_{n}=\one$, each action is optimal in $\mdp$ with respect to $V(\game)$, and
       \begin{math}
	V(\game)(s_{i})\le 	V(\game)(s_{i+1})
       \end{math}
       for each $i\in [0,n-1]$. Existence of such a path $\pi$ is shown by   Lem.~\ref{lem:increasingPath}.
Let  $\pi'=s_0 s_1 \dotsc s_{n-1} \one$ be the path in the WG $\wg$ induced by $\pi$---we simply omit actions. 

  The path $\pi'$ satisfies the following, for each $i\in [0,n-1]$. 
  \begin{align*}
   w(s_{i},s_{i+1}) 
  &=
  \max \bigl\{\,\bigl(\mynext_a \bigl(V(\game)\bigr)\bigr)(s_{i}) \;\big|\; a \in \Av'(s_{i}), 
                   s_{i+1}\in\post(s_{i},a)\,\bigr\}
\quad\text{by Def.~\ref{def:localPropag}}
  \\
  &=
\bigl(\mynext_{a_{i}} \bigl(V(\game)\bigr)\bigr)(s_{i})
    \quad\text{since $a_{i}$ is optimal wrt.\ $V(\game)$;}
  \\
  &
  \quad\text{note that $a_{i} \in \Av'(s_{i}), 
                   s_{i+1}\in\post(s_{i},a_{i})$ hold since $\pi$ is a path in $\mdp$}
  \\
  &=
 \textstyle\max_{a\in \Av'(s)}\bigl(\,\mynext_{a}\bigl(V(\game)\bigr)\,\bigr)(s_{i})
      \quad\text{since $a_{i}$ is optimal wrt.\ $V(\game)$} 
 \\
  &\ge
 V(\game)(s_{i}) \qquad\text{by Lem.~\ref{lem:keyLemmaForBVIWPP}.}
  \end{align*}
  This observation, combined with $V(\game)(s_{0})\le V(\game)(s_{1})\le\cdots\le V(\game)(s_{n})$ (by the definition of $\pi$), implies that 
the width of the path $\pi'$ is at least $V(\game)(s_{0})$. The widest path width is no smaller than that. \qed
\end{proof}


\begin{mytheorem}[soundness]\label{thm:soundness}
 In Algorithm~\ref{algo:ourAlgorithm}, $V(\game)\le U_{i}$ holds for each $i\in \nat$. 
\end{mytheorem}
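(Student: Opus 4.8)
The plan is to prove $V(\game) \le U_i$ by induction on $i$, using the key technical Lemma~\ref{lem:stepwiseSoundness} as the engine for the inductive step. The base case $i=0$ is immediate, since $U_0 = \top$ and $V(\game)(s) \le 1 = U_0(s)$ for every $s \in S$. For the inductive step, I would assume $V(\game) \le U_{i-1}$ and show $V(\game) \le U_i$, where by Line~\ref{algline:WPP} of Algorithm~\ref{algo:ourAlgorithm} we have $U_i = \min\{U_{i-1}, \wpw(\wg_i)\}$ with $\wg_i = \localPropagation(\mdp_i, U_{i-1})$ and $\mdp_i = \playerReduction(\game, L_i) = \mdp(\game, \Av_{L_i})$. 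Since $U_i$ is a pointwise minimum, it suffices to establish two bounds separately: $V(\game) \le U_{i-1}$ (which is exactly the induction hypothesis) and $V(\game)(s) \le \wpw(\wg_i)(s, \one)$ for each $s \in S$.

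The first bound is free from the induction hypothesis. For the second bound, the natural route is to compare the actual weighted graph $\wg_i = \localPropagation(\mdp_i, U_{i-1})$ against the ``idealized'' weighted graph $\localPropagation(\mdp_i, V(\game))$ appearing in Lemma~\ref{lem:stepwiseSoundness}. Note that $\Av_{L_i}$ is a Minimizer restriction (Def.~\ref{def:minimizerRestriction}), so Lemma~\ref{lem:stepwiseSoundness} applies directly to $\mdp_i$ and yields $\wpw\bigl(\localPropagation(\mdp_i, V(\game))\bigr)(s, \one) \ge V(\game)(s)$ for each $s$. The remaining gap is to transfer this lower bound from the idealized graph (built using $V(\game)$, which we do not know) to the actual graph $\wg_i$ (built using $U_{i-1}$). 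Here I would invoke monotonicity: by the induction hypothesis $V(\game) \le U_{i-1}$, and since each $\mynext_a$ is monotone in its function argument, the weight $w(s,s') = \max\{\mynext_a U_{i-1}(s) \mid a \in \Av_{L_i}(s),\, s' \in \post(s,a)\}$ in $\wg_i$ is pointwise at least the corresponding weight $\max\{\mynext_a V(\game)(s) \mid \dotsc\}$ in the idealized graph. Since the two graphs share the same vertex set $S$ and the same edge set $E$ (the edge relation depends only on $\Av_{L_i}$ and $\post$, not on $f$), increasing every edge weight can only increase the widest path width, so $\wpw(\wg_i)(s,\one) \ge \wpw\bigl(\localPropagation(\mdp_i, V(\game))\bigr)(s,\one) \ge V(\game)(s)$.

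Chaining the two bounds gives $V(\game)(s) \le \min\{U_{i-1}(s), \wpw(\wg_i)(s,\one)\} = U_i(s)$, completing the induction. I expect the main obstacle to be the monotonicity-transfer step: one must argue carefully that the edge \emph{set} $E$ of $\localPropagation(\mdp_i, f)$ is independent of $f$ (it is, by Def.~\ref{def:localPropag}, determined purely by $\Av'$ and $\post$), so that replacing $V(\game)$ by the larger $U_{i-1}$ changes only the weights and not the graph topology. Only then does the elementary fact that widest-path width is monotone in the weight function apply cleanly, letting the bound from Lemma~\ref{lem:stepwiseSoundness} carry over to the actual $\wg_i$ used in the algorithm.
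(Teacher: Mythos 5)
Your proposal is correct and follows essentially the same route as the paper: induction on $i$, with the base case $U_0=\top$, the inductive step reduced to Lemma~\ref{lem:stepwiseSoundness} applied to the idealized graph $\localPropagation(\mdp_i,V(\game))$, and a monotonicity transfer to the actual graph $\wg_i$. The paper packages the monotonicity step as the single assertion that the operator $U\mapsto\min\{U,\wpw(\localPropagation(\mdp(\game,\Av'),U))(\place,\one)\}$ is monotone in $U$; your explicit justification (the edge set is independent of $f$, the weights are monotone in $f$ via $\mynext_a$, and widest-path width is monotone in the weights) is exactly what that assertion relies on.
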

\begin{proof}
  We let the function 
 \begin{align*}
 &\min\bigl\{\,U,\, \wpw\bigl(\localPropagation\bigl(\,\mdp(\game, \Av'), U\,\bigr)\bigr)(\place, \one)\,\bigr\}\quad\colon \quad S\longrightarrow [0,1]
\\
 &\text{denoted by}\qquad
 T(\Av',U)\;\colon \quad S\longrightarrow [0,1],
 \end{align*}
clarifying its dependence on
 $\Av'$ and 
 $U\colon S\to [0,1]$. 
 Clearly, for each $i\in\nat$, we have 
$U_{i}\;=\;T(\Av_{L_{i}},U_{i-1})$.


\medskip
The rest of the proof is by induction. It is trivial if $i=0$ ($U_{0}=\top$). 
\begin{align*}
 U_{i+1} &= T(\Av_{L_{i}},U_{i})
  \\
 &\ge T(\Av_{L_{i}},V(\game)) \qquad\text{by  ind.\ hyp., and $T(\Av_{L_{i}},\place)$ is monotone}
  \\
 &=
 \min\bigl\{\,V(\game),\; \wpw\bigl(\localPropagation\bigl(\,\mdp(\game, \Av_{L_{i}}), V(\game)\,\bigr)\bigr)(\place, \one)\,\bigr\}
 \\
 &= V(\game)\qquad\text{by Lem.~\ref{lem:stepwiseSoundness}.}
\tag*{\qed}
\end{align*}
\end{proof}

It is clear that $U_{i}$ decreases with respect to $i$ ($U_{0}\ge U_{1}\ge\cdots$), by the presence of $\min$ in Line~\ref{algline:WPP}. It remains to show the following.

\begin{mytheorem}[convergence]\label{thm:convergence}
  In Algorithm~\ref{algo:ourAlgorithm}, let the while loop iterate forever. Then $U_{i}\to V(\game)$ as $i\to \infty$. 
\end{mytheorem}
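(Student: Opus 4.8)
The plan is to study the pointwise limit $U_\infty := \lim_{i\to\infty}U_i$, which exists because the sequence $U_0\ge U_1\ge\cdots$ is decreasing and bounded below by $V(\game)$ (soundness, Thm.~\ref{thm:soundness}); hence $U_\infty\ge V(\game)$ and it suffices to establish the reverse inequality $U_\infty\le V(\game)$. First I would exploit finiteness: since $\Av$ is finite, only finitely many Minimizer restrictions can occur as $\Av_{L_i}$, so some fixed restriction $\Av^\ast$ satisfies $\Av_{L_{i_k}}=\Av^\ast$ for infinitely many indices $i_k$, which I may additionally take with $i_k\ge i_{\mathrm M}$ (Lem.~\ref{lem:MIReachesG}). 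For such indices $V(\mdp_{i_k})=V(\game)$; since an MDP value function is a fixed point of its Bellman operator, $\max_{a\in\Av^\ast(s)}(\mynext_a V(\game))(s)=V(\game)(s)$ at every non-sink $s$, and at a Minimizer state this forces every $a\in\Av^\ast(s)$ to be optimal with respect to $V(\game)$, using $(\mynext_a V(\game))(s)\ge\min_{b}(\mynext_b V(\game))(s)=V(\game)(s)$.

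Writing $T(\Av',U)=\min\{U,\,\wpw(\localPropagation(\mdp(\game,\Av'),U))(\place,\one)\}$ as in the proof of Thm.~\ref{thm:soundness}, we have $U_{i_k}=T(\Av^\ast,U_{i_k-1})$. The next step is a continuity argument. For a \emph{fixed} restriction, the underlying graph of $\localPropagation(\mdp(\game,\Av^\ast),\place)$ has fixed vertex and edge sets, its weights~(\ref{eq:defOfWeightInLocalPropagation}) depend affinely on the propagated function, and the widest-path width is a maximum of minima over the finitely many \emph{simple} paths to $\one$; hence $T(\Av^\ast,\place)$ is continuous in its function argument. Passing to the limit along $i_k$ (where both $U_{i_k}\to U_\infty$ and $U_{i_k-1}\to U_\infty$) then yields the fixed-point equation $U_\infty=T(\Av^\ast,U_\infty)$, i.e. $U_\infty(s)\le\wpw(\wg^\ast)(s,\one)$ for every $s$, where $\wg^\ast:=\localPropagation(\mdp(\game,\Av^\ast),U_\infty)$.

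The heart of the proof is then a maximum-principle argument on the gap $\Delta:=\max_{s\in S}(U_\infty(s)-V(\game)(s))$. Assuming $\Delta>0$, let $S_\Delta=\{s\mid U_\infty(s)-V(\game)(s)=\Delta\}$ and fix $s_0\in S_\Delta$; then $U_\infty(s_0)\ge\Delta>0$, so $\wpw(\wg^\ast)(s_0,\one)>0$ and the widest path $s_0 s_1\cdots s_m=\one$ from $s_0$ genuinely reaches $\one$. I would prove by induction along this path that every $s_j\in S_\Delta$. Using the suffix property of $\wpath$ (Def.~\ref{def:wpp}), $U_\infty(s_j)\le\wpw(\wg^\ast)(s_j,\one)\le w(s_j,s_{j+1})=(\mynext_{a_j}U_\infty)(s_j)$ for the action $a_j\in\Av^\ast(s_j)$ realizing the edge weight~(\ref{eq:defOfWeightInLocalPropagation}) with $s_{j+1}\in\post(s_j,a_j)$. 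Combining $U_\infty(s_j)=V(\game)(s_j)+\Delta$, the global bound $U_\infty(s')\le V(\game)(s')+\Delta$, and $(\mynext_{a_j}V(\game))(s_j)\le V(\game)(s_j)$ (valid because $a_j$ is an unrestricted Maximizer action, or a $V(\game)$-optimal Minimizer action in $\Av^\ast$), a short computation forces all inequalities to be equalities; in particular $\sum_{s'}\delta(s_j,a_j,s')\,(V(\game)(s')+\Delta-U_\infty(s'))=0$, so every successor---hence $s_{j+1}$---lies in $S_\Delta$. Therefore $\one=s_m\in S_\Delta$, yet $U_\infty(\one)=V(\game)(\one)=1$ gives $\Delta=0$, a contradiction. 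Thus $\Delta=0$, so $U_\infty\le V(\game)$, and with soundness $U_i\to U_\infty=V(\game)$.

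The main obstacle I anticipate is this final step: converting the single fixed-point \emph{inequality} $U_\infty(\place)\le\wpw(\wg^\ast)(\place,\one)$ into the \emph{equality} $U_\infty=V(\game)$. The widest-path width only certifies \emph{one} sufficiently wide path, so the maximum-principle induction---tracking the extremal gap $\Delta$ and propagating the equality case through $\post(s_j,a_j)$ via the suffix structure of $\wpath$---is what does the real work, and it is precisely where the graph-theoretic and analytic ingredients must be combined. Secondary care is needed to justify the continuity of $\wpw$ by restricting to simple paths and to extract the stationary restriction $\Av^\ast$ at an index beyond $i_{\mathrm M}$.
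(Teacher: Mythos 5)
Your argument is correct, but it follows a genuinely different route from the paper's. Both proofs begin with the infinitary pigeonhole principle and both invoke Lem.~\ref{lem:MIReachesG} to ensure that, beyond $i_{\mathrm M}$, the recurring Minimizer data consists only of actions optimal with respect to $V(\game)$ (this is exactly what licenses your inequality $(\mynext_{a_j}V(\game))(s_j)\le V(\game)(s_j)$ at Minimizer states). From there the routes diverge. The paper applies the pigeonhole to the finer data of a \emph{strategy} $\sigma_i$ extracted from the first edges of the widest paths, proves the purely order-theoretic inequality $\wpw(\wg_i)\le \mynext_{\sigma_i}U_{i-1}$, deduces $U_{i_m}\le(\hat{\mynext}_{\sigma^\dagger})^{m}\top$ by induction on the domain restricted to $S_{\Diamond\one}$, and then appeals to uniqueness of the fixed point of $\hat{\mynext}_{\sigma^\dagger}$ on that restricted domain (an imported Markov-chain theorem) to conclude $\inf_i U_i\le V(\mdp_{i_m}^{\sigma^\dagger})\le V(\game)$. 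You instead apply the pigeonhole to the coarser data of the Minimizer restriction $\Av^{\ast}$, pass to the limit $U_\infty$ via a continuity argument that the paper never needs, and replace the imported uniqueness theorem by a self-contained maximum-principle induction propagating the extremal gap $\Delta$ along one fixed widest path to $\one$; your gap-propagation is in effect a direct proof of the uniqueness fact the paper cites, specialized to the widest-path fixed-point inequality. The paper's route buys an explicit recurring witness strategy $\sigma^\dagger$ and avoids any analytic continuity lemma; yours buys independence from the external uniqueness theorem and from the bookkeeping of restricting to $S_{\Diamond\one}$ and checking that the non-sink states of $\mdp^{\sigma^\dagger}$ are exactly $S_{\Diamond\one}$. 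The two points that genuinely need care in your write-up---continuity of $\wpw$ via reduction to simple paths, and the suffix property of $\wpath$ so that the induction follows a single finite path rather than freshly re-chosen widest paths that might never terminate at $\one$---are both correctly identified and handled.
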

\begin{proof}
We give a  proof using the infinitary pigeonhole principle. The proof is nonconstructive---it is not suited for analyzing the speed of convergence, for example---but the proof becomes simpler.

 In what follows,  we let $\mynext_{\sigma}\colon (S\to[0,1])\to (S\to[0,1])$ denote the Bellman operator on an MDP $\mdp$ induced by a strategy $\sigma$, i.e., 
$
  (\mynext_{\sigma}f)(s)
  \;:=\;
(\mynext_{\sigma(s)}f)(s).
$
The MC obtained from  an MDP $\mdp$  by fixing a strategy $\sigma$ is denoted by $\mdp^{\sigma}$.

Towards the statement of the theorem, for each $i\in\nat$, we choose a (positional) strategy $\sigma_{i}$ in the MDP $\mdp_{i}$ as follows.
\begin{itemize}
 \item For each $s\in S_{\Diamond\one}$, take the widest path $\wpath(\wg_{i}, \one)(s)=s s_{1}\dotsc \one$ in $\wg_{i}$ from $s$ to $\one$ (Def.~\ref{def:wpp}). 
Such a path from $s$ to $\one$ exists---otherwise we have $U_{i}(s)=0$, hence $V(\game)(s)=0$ by Thm.~\ref{thm:soundness}. 

 Let $\sigma_{i}(s)$ be an action that justifies the first edge in the chosen widest path, that is, $a\in \Av_{i}(s)$ such that $s_{1}\in\post(s,a)$. 
 \item For each $s\in S\setminus S_{\Diamond\one}$, $\sigma_{i}(s)$ is freely chosen from $\Av_{i}(s)$. 
\end{itemize}

It is then easy to see that 
\begin{equation}\label{eq:1713}
 \wpw(\wg_{i})(s)\le (\mynext_{\sigma_{i}} U_{i-1})(s)
 \qquad\text{for each $i\in\nat$ and $s\in S_{\Diamond\one}$. }
\end{equation}
Indeed, by the definition of $\sigma_{i}$, the right-hand side is the weight of the first edge in the chosen widest path. This must be no smaller than the widest path width, that is, the width of the chosen path.

Now, since there are only finitely many strategies for the SG $\game$, the same is true for the MDPs $\mdp_{0}, \mdp_{1}, \dotsc$ that are obtained from $\game$ by restricting Minimizer's actions. Therefore, by the infinitary pigeonhole principle, there are infinitely many  $i_{0}<i_{1}<\cdots$  such that 
$ \sigma_{i_{0}}=\sigma_{i_{1}}=\cdots\quad =: \sigma^{\dagger}. $
Moreover, we can choose them so that they are all beyond $i_{\mathrm{M}}$ in Lem.~\ref{lem:MIReachesG}, in which case we have 
\begin{equation}\label{eq:mdpLowerThanGame}
 V(\mdp_{i_{m}}^{\sigma^{\dagger}})\le V(\game)
\quad\text{for each $m\in\nat$.} 
\end{equation}
Indeed, Minimizer's actions are already optimized in $\mdp_{i}$ (Lem.~\ref{lem:MIReachesG}), and thus the only freedom left for $\sigma^{\dagger}$ is to choose suboptimal actions of Maximizer's. 

In what follows, we cut down the domain of discourse from $S\to [0,1]$ to $S_{\Diamond\one}\to [0,1]$,
i.e., 
1) every function of the type $f:S \to [0,1]$ is now seen as the restriction over $S_{\Diamond\one}$, and 
2) the Bellman operator only adds up the value of the input function over $S_{\Diamond\one}$, namely 
it is now defined by 
$\hat\mynext_a f(s) = \sum_{s' \in S_{\Diamond\one}} \delta(s,a,s') \cdot f(s')$. 
The operator $\hat\mynext_\sigma$ is also defined in a similar way to $\mynext_\sigma$. 

Now 
proving convergence in $S_{\Diamond\one}\to [0,1]$ suffices for the theorem. 
Indeed, for each $i\ge i_{\mathrm{M}}$, we have $V(\mdp_{i})(s)=V(\game)(s)=0$ for each $s\in S\setminus S_{\Diamond\one}$. This implies that there is no path from $s$ to $\one$ in $\mdp_{i}$, thus neither in the WG $\wg_{i}$. Therefore $U_{i}\le\wpw(\wg_{i})=0$. 

A benefit of this domain restriction is that the Bellman operator $\hat\mynext_{\sigma}$ 
has a unique fixed point in  $S_{\Diamond\one}\to [0,1]$ 
if the set of non-sink states in $\mdp^\sigma$ is exactly $S_{\Diamond\one}$, 
i.e., $ V(\mdp^\sigma)(s) > 0 $ holds if and only if $s \in S_{\Diamond\one}$.
Furthermore, this unique fixed point is the value function $V(\mdp^{\sigma})$ restricted to $S_{\Diamond\one}\subseteq S$~\cite[Thm.~10.19]{baier2008principles}. 
Therefore $V(\mdp^{\sigma})$ is computed by the gfp Kleene iteration, too:
\begin{equation}\label{eq:uniqueFixedPointThusGfp}
\top \;\ge\;
\hat\mynext_{\sigma}\top
\;\ge\;
(\hat\mynext_{\sigma})^{2}\top
\;\ge\;
\cdots\quad\longrightarrow V(\mdp^{\sigma}) 
\quad \text{in the space $S_{\Diamond\one}\to [0,1]$.}
\end{equation} 

We show the following by induction on $m$. 
\begin{equation}\label{eq:UAndX}
U_{i_{m}} \;\le\; (\hat\mynext_{\sigma^{\dagger}})^{m}\top
\quad\text{for each $m\in\nat$.} 
\end{equation}
It is obvious for $m=0$. For the step case, 
we have the following.  
Notice that the inequality~(\ref{eq:1713}) holds in the restricted domain for $i \geq i_M$. 
\begin{align*}
U_{i_{m+1}}
&\le 
\wpw(\wg_{i_{m+1}})
\quad\text{by Line~\ref{algline:WPP} of Algorithm~\ref{algo:ourAlgorithm}} 
\\
&\le
\hat\mynext_{\sigma^{\dagger}} U_{i_{m+1}-1}
\quad\text{by~(\ref{eq:1713})}
\\
&\le
\hat\mynext_{\sigma^{\dagger}} U_{i_{m}}
\quad\text{by monotonicity of $\hat\mynext_{\sigma^{\dagger}}$, decrease of $U_{i}$ and $i_{m}<i_{m+1}$}
\\
&\le 
(\hat\mynext_{\sigma^{\dagger}})^{m+1}\top
\quad\text{by the induction hypothesis.}
\end{align*}
We have proved~(\ref{eq:UAndX}) which proves $\inf_{i}U_{i}\le \inf_m(\hat\mynext_{\sigma^{\dagger}})^{m}\top$. 

Lastly, we prove that $ V(\mdp_{i_m}^{\sigma^\dagger})(s) > 0 $ holds if and only if $ s \in S_{\Diamond\one}$ for each $m \in \nat$, 
and thus $\sigma^\dagger$ follows the characterization in~(\ref{eq:uniqueFixedPointThusGfp}).  
This proves 
\begin{equation}\label{eq:convergence}
\inf_{i}U_{i}\le V(\mdp_{i_{m}}^{\sigma^{\dagger}}) 
\quad\text{for each $m\in\nat$.} 
\end{equation}
Implication to the right is clear as Minimizer restriction is done optimally in $\mdp_{i_m}$. Conversely, if $s \in S_{\Diamond\one}$, then there is a path from $s$ to $\one$ in $\wg_{i_m}$. 
Let $\wpath(\wg_{i_m}, \one)(s) = s_0s_1\ldots s_k$, where $s_0 = s$, $k \in \nat$ and $s_k = \one$. 
Then by the property of $\wpath$ and $\sigma^\dagger$, we have $\delta(s_j, \sigma^\dagger(s_j), s_{j+1}) >0$ for each $j <k$. 
Thus, the probability that the finite path $\wpath(\wg_{i_m}, \one)(s)$ is obtained by running $\mdp_{i_m}^{\sigma^\dagger}$ starting from $s$, which is apparently at most $V(\mdp_{i_{m}}^{\sigma^{\dagger}})(s)$, is nonzero. 
Hence we have implication to the left.

Combining~(\ref{eq:mdpLowerThanGame}),~(\ref{eq:convergence}) and Thm.~\ref{thm:soundness}, we obtain the claim. \qed

\end{proof}

An example execution of Algorithm~\ref{algo:ourAlgorithm} is found in the appendix.

\section{Experiment Results}\label{sec:experiment}
\paragraph{Experiment Settings}
We compare the following four algorithms.

\begin{itemize}
 \item {\it WP} is  our BVI algorithm via  widest paths. It avoids end component (EC) computation by global propagation of upper bounds. 
 \item 
 {\it DFL} is the implementation of the main algorithm in~\cite{kelmendi2018value}. It relies on  EC computation for deflating. 
 \item 
{\it DFL\_m} is our modification of  DFL, where some unnecessary repetition of EC computation is removed.
 \item 
{\it DFL\_BRTDP} is the learning-based variant of DFL.
It
restricts  bound  update to those states which are  visited by simulations. See~\cite{kelmendi2018value} for details. 
\end{itemize}
The latter three---coming from~\cite{kelmendi2018value}---are the only existing BVI algorithms for SGs with a convergence guarantee, to the best of our knowledge. The implementation of DFL and DFL\_BRTDP is provided by the authors of~\cite{kelmendi2018value}.

The four algorithms are implemented on top of PRISM-games~\cite{KPW18} version 2.0 
and publicly available online\footnote{\url{https://github.com/kittiphonp/CAV20Impl}}. 
We used the stopping threshold $\varepsilon=10^{-6}$. 
The experiments were conducted on 
 Dell Inspiron 3421 Laptop with 4.00 GB RAM and Intel(R) Core(TM) i5-3337U 1.80 GHz processor. 

In the implementations of DFL and DFL\_BRTDP, the deflating operation is applied only once  every five iterations~\cite[\S{}B.3]{kelmendi2018value}.
Following this, our WP also solves the widest path problem (Line~\ref{algline:WPP}) only once every five iterations, while other operations are 
applied in each iteration.

For input SGs, we took four  models from the literature: {\it mdsm}~\cite{ChenFKPS13}, {\it cloud}~\cite{CalinescuKJ12}, {\it teamform}~\cite{ChenKPS11} and {\it investor}~\cite{McIverM07}. In addition, we used our model {\it manyECs}---an artificial model with many ECs---to assess the effect of ECs on performance. The model manyECs is presented in the appendix. 
Each of these five models comes with a model parameter $N$.


There is another model called {\it cdmsn} in~\cite{kelmendi2018value}. We do not discuss cdmsn  since all the algorithms (ours and those from~\cite{kelmendi2018value}) terminated within 0.001 seconds.


\paragraph{Results} 
The number $i$ of iterations  and the running time for each algorithm and each input SG is shown in Table~\ref{tab:result}. 
For DFL\_BRTDP, the ratio of states visited by the algorithm is shown in percentage; the smaller it is, the more efficient the algorithm is in reducing the state space. 
Each number for DFL\_BRTDP (a probabilistic algorithm) is the average over 5 runs. 
 

\begin{table}[tbp]
\centering
    \caption{Experimental results, comparing WP (our algorithm)  with those in~\cite{kelmendi2018value}. $N$ is a model parameter (the bigger the more complex). \#states, \#trans, \#EC show the numbers of states, transitions and ECs in the SG, respectively. itr is the number $i$ of iterations at termination; time is the execution time in seconds. For each SG, the fastest algorithm is shaded in green. The settings that did not terminate are shaded in gray; TO is time out (6 hours), OOM is out of memory, and SO is stack overflow. }
    \label{tab:result}
    \scalebox{0.94}{
\begin{tabular}{l|rrrr||rr|rr|rrr||rr}
	\multirow{2}{*}{model}    & \multirow{2}{*}{$N$} & \multirow{2}{*}{\#states} & \multirow{2}{*}{\#trans} & \multirow{2}{*}{\#EC} & \multicolumn{2}{c|}{DFL} & \multicolumn{2}{c|}{DFL\_m} & \multicolumn{3}{c||}{DFL\_BRTDP} & \multicolumn{2}{c}{WP} \\ 
	&                    &                           &                          &                       & itr    & time            & itr      & time             & itr      & visit\%    & time    & itr    & time            \\ \hline\hline
	\multirow{2}{*}{mdsm}     & 3                  & 62245                     & 151143                   & 1                     & 121    & \cellcolor{green!25}3               & 121         & 4                & 17339    & 49.3       & 15      & 120    & 5               \\ 
	& 4                  & 335211                    & 882765                   & 1                     & 125    & \cellcolor{green!25}15              & 125         & 47               & 91301    & 42.1       & 86      & 124    & 38              \\ \hline
	\multirow{3}{*}{cloud}    & 5                  & 8842                      & 60437                    & 4421                  & 7      & 7               & 7          & 1                & 167      & 6.9        & 14      & 7      & \cellcolor{green!25}\textless{}1    \\ 
	& 6                  & 34954                     & 274965                   & 17477                 & 11     & 177             & 11         & 5                & 41       & 0.6        & 3       & 11     & \cellcolor{green!25}1               \\ 
	& 7                  & 139402                    & 1237525                  & 69701                 & 11     & 19721           & 11         & 62               & 41       & 0.2        & \cellcolor{green!25}4       & 11     & 5               \\ \hline
	\multirow{3}{*}{teamform} & 3                  & 12475                     & 15228                    & 2754                  & 2      & \textless{}1    & 2         & \textless{}1     & 972      & 49.0       & 137     & 2      & \textless{}1    \\ 
	& 4                  & 96665                     & 116464                   & 19800                 & 2      & \textless{}1    & 2         & \textless{}1     & 4154     & 34.6       & 9603    & 2      & \textless{}1    \\ 
	& 5                  & 907993                    & 1084752                  & 176760                & 2      & \textless{}1    & 2         & \textless{}1     & \cellcolor{gray!25}         & \cellcolor{gray!25}           & \cellcolor{gray!25}TO       & 2      & \textless{}1    \\ \hline
	\multirow{2}{*}{investor} & 50                 & 211321                    & 673810                   & 29690                     & 441    & 184             & 441      & 249              & \cellcolor{gray!25}         & \cellcolor{gray!25}           & \cellcolor{gray!25}TO       & 364    & \cellcolor{green!25}48              \\ 
	& 100                & 807521                    & 2587510                  & 114390                     & 801    & 3318            & \cellcolor{gray!25}         & \cellcolor{gray!25}OOM                & \cellcolor{gray!25}         & \cellcolor{gray!25}           & \cellcolor{gray!25}TO        & 688    & \cellcolor{green!25}736             \\ \hline
	\multirow{3}{*}{manyECs}   & 500                & 1004                      & 3007                     & 502                   & 6      & 7               & 6        & 7                & \cellcolor{gray!25}         & \cellcolor{gray!25}           & \cellcolor{gray!25}TO       & 5      & \cellcolor{green!25}\textless{}1    \\ 
	& 1000               & 2004                      & 6007                     & 1002                  & 6      & 51              & 6        & 51               & \cellcolor{gray!25}         & \cellcolor{gray!25}           & \cellcolor{gray!25}TO        & 5      & \cellcolor{green!25}\textless{}1    \\ 
	& 5000               & 10004                     & 30007                    & 5002                  & \cellcolor{gray!25}       & \cellcolor{gray!25}SO               & \cellcolor{gray!25}         & \cellcolor{gray!25}SO                & \cellcolor{gray!25}         & \cellcolor{gray!25}           & \cellcolor{gray!25}TO        & 5      & \cellcolor{green!25}\textless{}1    \\
\end{tabular}
}
\end{table}

\paragraph{Discussion}
We observe consistent performance advantage of our algorithm (WP). Even in the mdsm model where the DFL algorithms do not suffer from EC computation (\#EC is just 1), WP's performance is comparable to DFL. The cloud model is where the learning-based approach in~\cite{kelmendi2018value} works well---see  visit\% that are very small.  Our WP performs comparably against DFL\_BRTDP, too. 

The performance advantage of our WP algorithm is eminent, not only in the artificial model of manyECs (where WP is faster by magnitudes), but also in the realistic model investor that comes from a financial application scenario~\cite{McIverM07}. The results for these two models suggest that WP is indeed advantageous when EC computation poses a bottleneck for other algorithms.


Overall, we observe that our WP algorithm can be the first choice when it comes to solving SGs: for some models, it runs much faster than other algorithms; for other models, even if the performances of other algorithms differs a lot, WP's performance is comparable with the best algorithm.

 

\section{Conclusions and Future Work}\label{sec:conclusion}
In this paper, we presented a new BVI algorithm for solving stochastic games. It features global propagation of upper bounds by widest paths, via a novel encoding of the problem to a suitable weighted graph. This way we avoid  computation of end components that often penalizes the performance of the other BVI-based algorithms. Our experimental comparison with known BVI algorithms for SGs demonstrates the efficiency of our algorithm. For correctness of the algorithm, we presented proofs for soundness and convergence.

Extending the current algorithm for more advanced settings is future work---this is much like the results in~\cite{kelmendi2018value} are extended and used in~\cite{AshokKW19,EisentrautKA19preprint,AshokKW19preprint}. In doing so, we hope to make essential use of structures that are unique to those advanced problem settings. Another important direction is to push forward the idea of global propagation in verification and synthesis, seeking further instances of the idea. Finally, pursuing the global propagation idea in the context of reinforcement learning---where problems are often formalized using MDPs and the Bellman operator is heavily utilized---may open up another fruitful collaboration between formal methods and statistical machine learning.



\section*{Acknowledgment} 
The authors are supported by ERATO HASUO Metamathematics for Systems Design Project (No.\ JPMJER1603), JST; I.H.\ is supported by Grant-in-Aid No.\ 15KT0012, JSPS. Thanks are due to Maximilian Weininger and Edon Kelmendi for sharing their implementation, and to Pranav Ashok and David Sprunger for useful discussions and comments. 

%
%
%

\bibliographystyle{splncs04}
\bibliography{ref}

\newpage
\appendix
\section{Step-by-Step Example of Algorithm~\ref{algo:ourAlgorithm}}
We demonstrate the operation of our algorithm (Algorithm~\ref{algo:ourAlgorithm}) step by step. 
Take the SG in Fig.~\ref{fig:ex_sg_ec} as input $
\game$, and let $\varepsilon = 0.01$.

\begin{figure}[tbp]
	\parbox[c]{.3\textwidth}{    \centering
		\scalebox{.6}{    \begin{tikzpicture}[square/.style={regular polygon,regular polygon sides=4}]
			\node[state, square] at (0,0) (s1) {$s_1$};
			\node[state, initial] at (-150:2.5) (s0) {$s_I$};
			\node[state, square] at (0,-2.5) (s2) {$s_2$};
			\node[state, square, minimum size=1cm] at (2.5,0) (target) {$\mathbf{1}$};
			\node[state] at (2.5,-2.5) (sink) {$\mathbf{0}$};
			\draw
			(target) edge[loop right] node[pos=0.3,above right]{$\alpha$} node[pos=0.75,below right]{1} (target)
			(sink)   edge[loop right] node[pos=0.3,above right]{$\alpha$} node[pos=0.75,below right]{1} (sink)
			(s0)     edge[bend left=20] node[pos=0.25,above]{$\alpha$} node[pos=0.7,above]{1} (s1)
			(s1)     edge[bend left=20] node[pos=0.25,above]{$\alpha$} node[pos=0.7,above]{1} (s0)
			(s0)     edge[bend left=20] node[pos=0.25,below]{$\beta$} node[pos=0.7,below]{1} (s2)
			(s2)     edge[bend left=20] node[pos=0.25,below]{$\alpha$} node[pos=0.7,below]{1} (s0)
			(s1)     edge node[pos=0.2,above]{$\beta$} node[pos=0.7,above]{0.8} (target)
			(s2)     edge node[pos=0.2,above]{$\beta$} node[pos=0.7,above]{0.9} (sink);
			\draw[->] (1,0) to node[sloped,above,pos=0.3]{0.2} (sink);
			\draw[->] (1,-2.5) to node[sloped,above,pos=0.3]{0.1} (target);
			\end{tikzpicture}
		}    \caption{An example of SG---$U_{i}$ does not converge to $V(\game)$ in Algorithm~\ref{algo:BVI}}
		\label{fig:ex_sg_ec}
	}
	\hfill
	\parbox[c]{.66\textwidth}{
		\centering
		\subfloat[The WG $\wg_{1}$\label{fig:g1}]{
			\scalebox{0.6}{
				\begin{tikzpicture}[square/.style={regular polygon,regular polygon sides=4}]
				\node[state, square] at (0,0) (s1) {$s_1$};
				\node[state, initial] at (-150:2.5) (s0) {$s_I$};
				\node[state, square] at (0,-2.5) (s2) {$s_2$};
				\node[state, square, minimum size=1cm] at (2.5,0) (target) {$\mathbf{1}$};
				\node[state] at (2.5,-2.5) (sink) {$\mathbf{0}$};
				\draw
				(target) edge[loop right] node[right]{1} (target)
				(sink)   edge[loop right] node[right]{1} (sink)
				(s0)     edge[bend left=20] node[pos=0.25,above]{1} (s1)
				(s1)     edge[bend left=20] node[pos=0.25,above]{1} (s0)
				(s0)     edge[bend left=20] node[pos=0.25,below]{1} (s2)
				(s2)     edge[bend left=20] node[pos=0.25,below]{1} (s0)
				(s1)     edge node[pos=0.2,above]{1} (target)
				(s1)     edge node[pos=0.2,sloped,above]{1} (sink)
				(s2)     edge node[pos=0.2,sloped,above]{1} (target)
				(s2)     edge node[pos=0.2,above]{1} (sink);
				\end{tikzpicture}
			}
		}	
		\subfloat[The WG $\wg_{2}$\label{fig:g2}]{
			\scalebox{0.6}{
				\begin{tikzpicture}[square/.style={regular polygon,regular polygon sides=4}]
				\node[state, square] at (0,0) (s1) {$s_1$};
				\node[state, initial] at (-150:2.5) (s0) {$s_I$};
				\node[state, square] at (0,-2.5) (s2) {$s_2$};
				\node[state, square, minimum size=1cm] at (2.5,0) (target) {$\mathbf{1}$};
				\node[state] at (2.5,-2.5) (sink) {$\mathbf{0}$};
				\draw
				(target) edge[loop right] node[right]{1} (target)
				(sink)   edge[loop right] node[right]{0} (sink)
				(s1)     edge[bend left=20] node[pos=0.25,above left]{1} (s0)
				(s0)     edge[bend left=20] node[pos=0.25,below]{1} (s2)
				(s2)     edge[bend left=20] node[pos=0.25,below]{1} (s0)
				(s1)     edge node[pos=0.2,above]{0.8} (target)
				(s1)     edge node[pos=0.2,sloped,above]{0.8} (sink)
				(s2)     edge node[pos=0.2,sloped,above]{0.1} (target)
				(s2)     edge node[pos=0.2,above]{0.1} (sink);
				\end{tikzpicture}
			}
		}	
		\caption{The WGs  generated from the SG in Fig.~\ref{fig:ex_sg_ec}. States are still distinguished between Maximizer's and Minimizer's; this is for readers' convenience.}
	}
\end{figure}

In iteration $i = 1$ (i.e., in the first iteration) we compute the following.
\begin{itemize}
	\item In Line~\ref{algline:VI} we get $L_1 = \chi_{\{\one\}}$, the characteristic function of $\{\one\}$.
	\item In Line~\ref{algline:playerReduction} we get $\mdp_{1}$ that retains all the actions in $\game$.
	\item In Line~\ref{algline:localPropagation} we get the WG $\wg_{1}$ shown in Fig.~\ref{fig:g1}. 
	\item In Line~\ref{algline:WPP} we get $U_1 = \chi_{S - \{\zero\}}$. 
\end{itemize}

After the first iteration, we have $U_1(s_I) - L_1(s_I) = 1 >\varepsilon$ and the algorithm continues. In the iteration $i=2$ we compute the following.
\begin{itemize}
	\item In Line~\ref{algline:VI}, $L_2(\one) = 1$, $L_2(\zero) = L_2(s_I) = 0$, $L_2(s_1) = 0.8$ and  $L_2(s_2) = 0.1$.
	\item In Line~\ref{algline:playerReduction} we get an MDP $\game^{(2)}$ in which the action $\alpha$ from $s_{I}$ is dropped (cf.\ Fig.~\ref{fig:g2}). This action $\alpha$ is dropped since it was guessed to be suboptimal from the observation $L_{2}(s_{1})>L_{2}(s_{2})$. This happens to be a correct guess, but its correctness does not affect the soundness of the algorithm.
	\item In Line~\ref{algline:localPropagation} we get the WG $\wg_{2}$ depicted in Fig.~\ref{fig:g2}. 
	\item In Line~\ref{algline:WPP}, $U_2(\one) = 1$, $U_2(\one) = 0$, $U_2(s_1) =0.8$ and $U_2(s_I) = U_2(s_2) = 0.1$. 
\end{itemize}
After the second iteration, we have $U_1(s_I) - L_1(s_I) = 0.1 >\varepsilon$ and the algorithm enters yet another iteration. 
In the third iteration, the algorithm applies the Bellman update to $L_2$ which leads to $L_{3}(s_{I})=0.1$. Nothing changes in Lines~\ref{algline:playerReduction}--\ref{algline:WPP} since  we happen to already have $U_2 = V(\game)$. 
The algorithm terminates by witnessing $L_3(s_I) = U_3(s_I) = 0.1$.


\section{The model manyECs}

The model manyECs for the parameter $N=3$ is as follows. Here we omit the sink state $\zero$ and the action labels. 
\begin{center}
	\includegraphics[bb=48 58 895 402,clip,width=\textwidth]{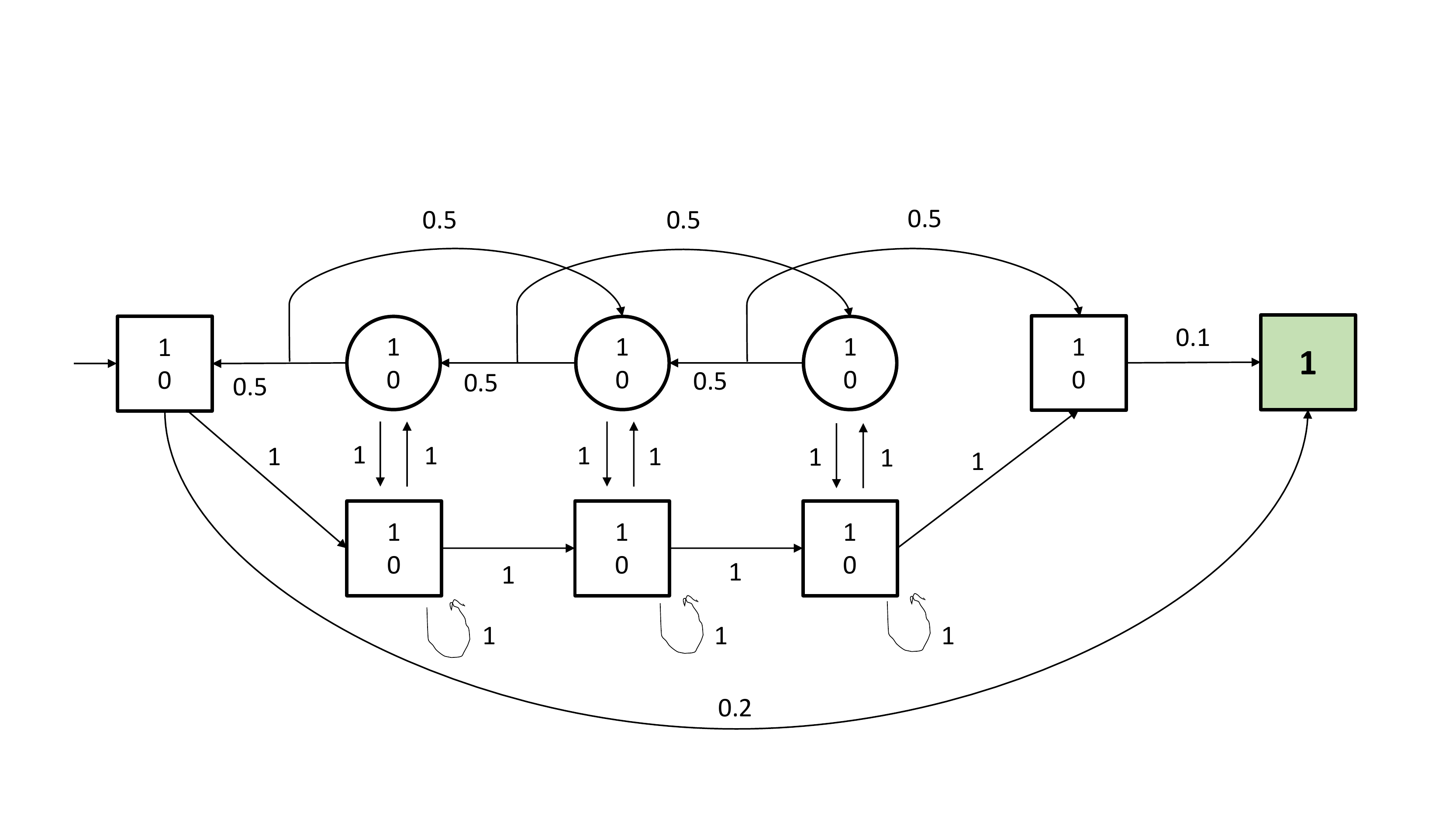}
\end{center}

\end{document}